\newtheorem{definition}{Definition}
\newtheorem{proposition}[definition]{Proposition}
\newtheorem{lemma}[definition]{Lemma}
\newtheorem{theorem}[definition]{Theorem}
\newtheorem{corollary}[definition]{Corollary}
\newtheorem{conjecture}[definition]{Conjecture}
\newtheorem{remark}[definition]{Remark}
\newtheorem{example}[definition]{Example}
\newtheorem{question}[definition]{Question}
\newtheorem{memo}[definition]{Memo}
\def\squareforqed{\hbox{\rlap{$\sqcap$}$\sqcup$}}
\def\qed{\ifmmode\squareforqed\else{\unskip\nobreak\hfil
\penalty50\hskip1em\null\nobreak\hfil\squareforqed
\parfillskip=0pt\finalhyphendemerits=0\endgraf}\fi}
\def\endenv{\ifmmode\;\else{\unskip\nobreak\hfil
\penalty50\hskip1em\null\nobreak\hfil\;
\parfillskip=0pt\finalhyphendemerits=0\endgraf}\fi}
\newenvironment{proof}{\noindent \textbf{{Proof.~} }}{\qed}
\def\Dbar{\leavevmode\lower.6ex\hbox to 0pt
{\hskip-.23ex\accent"16\hss}D}
\def\url@leostyle{%
  \@ifundefined{selectfont}{\def\UrlFont{\sf}}{\def\UrlFont{\small\ttfamily}}}
\def\bcj{\begin{conjecture}}
\def\ecj{\end{conjecture}}
\def\bcr{\begin{corollary}}
\def\ecr{\end{corollary}}
\def\bd{\begin{definition}}
\def\ed{\end{definition}}
\def\bea{\begin{eqnarray}}
\def\eea{\end{eqnarray}}
\def\beq{\begin{equation}}
\def\eeq{\end{equation}}
\def\bal{\begin{aligned}}
\def\eal{\end{aligned}}
\def\bem{\begin{enumerate}}
\def\eem{\end{enumerate}}
\def\bex{\begin{example}}
\def\eex{\end{example}}
\def\bim{\begin{itemize}}
\def\eim{\end{itemize}}
\def\bl{\begin{lemma}}
\def\el{\end{lemma}}
\def\bma{\begin{bmatrix}}
\def\ema{\end{bmatrix}}
\def\bpf{\begin{proof}}
\def\epf{\end{proof}}
\def\bpp{\begin{proposition}}
\def\epp{\end{proposition}}
\def\bqu{\begin{question}}
\def\equ{\end{question}}
\def\br{\begin{remark}}
\def\er{\end{remark}}
\def\bt{\begin{theorem}}
\def\et{\end{theorem}}
\def\bmm{\begin{memo}}
\def\emm{\end{memo}}
\def\btb{\begin{tabular}}
\def\etb{\end{tabular}}
\newcommand{\nc}{\newcommand}
\def\a{\alpha}
\def\b{\beta}
\def\t{\theta}
\def\i{\iota}
\def\l{\lambda}
\def\r{\rho}
\def\s{\sigma}
\def\ps{\psi}
\def\og{\omega}
\nc{\bbA}{\mathbb{A}} \nc{\bbB}{\mathbb{B}} \nc{\bbC}{\mathbb{C}}
 \nc{\bbD}{\mathbb{D}} \nc{\bbE}{\mathbb{E}} \nc{\bbF}{\mathbb{F}}
 \nc{\bbG}{\mathbb{G}} \nc{\bbH}{\mathbb{H}} \nc{\bbI}{\mathbb{I}}
 \nc{\bbJ}{\mathbb{J}} \nc{\bbK}{\mathbb{K}} \nc{\bbL}{\mathbb{L}}
 \nc{\bbM}{\mathbb{M}} \nc{\bbN}{\mathbb{N}} \nc{\bbO}{\mathbb{O}}
 \nc{\bbP}{\mathbb{P}} \nc{\bbQ}{\mathbb{Q}} \nc{\bbR}{\mathbb{R}}
 \nc{\bbS}{\mathbb{S}} \nc{\bbT}{\mathbb{T}} \nc{\bbU}{\mathbb{U}}
 \nc{\bbV}{\mathbb{V}} \nc{\bbW}{\mathbb{W}} \nc{\bbX}{\mathbb{X}}
 \nc{\bbZ}{\mathbb{Z}}
 \nc{\bA}{{\bf A}} \nc{\bB}{{\bf B}} \nc{\bC}{{\bf C}}
 \nc{\bD}{{\bf D}} \nc{\bE}{{\bf E}} \nc{\bF}{{\bf F}}
 \nc{\bG}{{\bf G}} \nc{\bH}{{\bf H}} \nc{\bI}{{\bf I}}
 \nc{\bJ}{{\bf J}} \nc{\bK}{{\bf K}} \nc{\bL}{{\bf L}}
 \nc{\bM}{{\bf M}} \nc{\bN}{{\bf N}} \nc{\bO}{{\bf O}}
 \nc{\bP}{{\bf P}} \nc{\bQ}{{\bf Q}} \nc{\bR}{{\bf R}}
 \nc{\bS}{{\bf S}} \nc{\bT}{{\bf T}} \nc{\bU}{{\bf U}}
 \nc{\bV}{{\bf V}} \nc{\bW}{{\bf W}} \nc{\bX}{{\bf X}}
 \nc{\bZ}{{\bf Z}}
\nc{\cA}{{\cal A}} \nc{\cB}{{\cal B}} \nc{\cC}{{\cal C}}
\nc{\cD}{{\cal D}} \nc{\cE}{{\cal E}} \nc{\cF}{{\cal F}}
\nc{\cG}{{\cal G}} \nc{\cH}{{\cal H}} \nc{\cI}{{\cal I}}
\nc{\cJ}{{\cal J}} \nc{\cK}{{\cal K}} \nc{\cL}{{\cal L}}
\nc{\cM}{{\cal M}} \nc{\cN}{{\cal N}} \nc{\cO}{{\cal O}}
\nc{\cP}{{\cal P}} \nc{\cQ}{{\cal Q}} \nc{\cR}{{\cal R}}
\nc{\cS}{{\cal S}} \nc{\cT}{{\cal T}} \nc{\cU}{{\cal U}}
\nc{\cV}{{\cal V}} \nc{\cW}{{\cal W}} \nc{\cX}{{\cal X}}
\nc{\cY}{{\cal Y}}\nc{\cZ}{{\cal Z}}
\nc{\hA}{{\hat{A}}} \nc{\hB}{{\hat{B}}} \nc{\hC}{{\hat{C}}}
\nc{\hD}{{\hat{D}}} \nc{\hE}{{\hat{E}}} \nc{\hF}{{\hat{F}}}
\nc{\hG}{{\hat{G}}} \nc{\hH}{{\hat{H}}} \nc{\hI}{{\hat{I}}}
\nc{\hJ}{{\hat{J}}} \nc{\hK}{{\hat{K}}} \nc{\hL}{{\hat{L}}}
\nc{\hM}{{\hat{M}}} \nc{\hN}{{\hat{N}}} \nc{\hO}{{\hat{O}}}
\nc{\hP}{{\hat{P}}} \nc{\hR}{{\hat{R}}} \nc{\hS}{{\hat{S}}}
\nc{\hT}{{\hat{T}}} \nc{\hU}{{\hat{U}}} \nc{\hV}{{\hat{V}}}
\nc{\hW}{{\hat{W}}} \nc{\hX}{{\hat{X}}} \nc{\hZ}{{\hat{Z}}}
\nc{\hn}{{\hat{n}}}
\def\diag{\mathop{\rm diag}}
\def\max{\mathop{\rm max}}
\def\min{\mathop{\rm min}}
\def\tr{\mathop{\rm Tr}}
\def\dg{\dagger}
\newcommand{\bra}[1]{\langle#1|}
\newcommand{\ket}[1]{|#1\rangle}
\newcommand{\proj}[1]{| #1\rangle\!\langle #1 |}
\newcommand{\ketbra}[2]{|#1\rangle\!\langle#2|}
\newcommand{\norm}[1]{\lVert#1\rVert}
\newcommand{\abs}[1]{|#1|}
\def\Dbar{\leavevmode\lower.6ex\hbox to 0pt
{\hskip-.23ex\accent"16\hss}D}
\begin{document}

\normalsize

\title{Quantum Wasserstein distance between unitary operations}

\date{\today}
\author{Xinyu Qiu}\email[]{xinyuqiu@buaa.edu.cn}
\affiliation{LMIB(Beihang University), Ministry of education, and School of Mathematical Sciences, Beihang University, Beijing 100191, China}
\author{Lin Chen}\email[]{linchen@buaa.edu.cn (corresponding author)}
\affiliation{LMIB(Beihang University), Ministry of education, and School of Mathematical Sciences, Beihang University, Beijing 100191, China}
\affiliation{International Research Institute for Multidisciplinary Science, Beihang University, Beijing 100191, China}

\begin{abstract}
Quantifying the effect of noise on unitary operations is an essential task in quantum information processing. We propose the quantum Wasserstein distance between unitary operations, which shows an explanation for quantum circuit complexity and characterizes local distinguishability of multi-qudit operations. We show analytical calculation  of the distance between identity and widely-used quantum gates including SWAP, CNOT, and other controlled gates. As an application, we estimate the closeness between quantum gates in circuit, and show that the noisy operation simulates the ideal one well  when they become close under the distance. Further we introduce the $W_1$ error rate by the distance, and establish the relation between the $W_1$ error rate and two practical cost measures of recovery operation in quantum error-correction under typical noise scenarios.  
\end{abstract}

\maketitle

\section{Introduction}

Recent progress in quantum information processing derives prominent applications, such as  simulation \cite{Cattaneo2023simulation, schlimgen2021quantum},  control into computation \cite{dong2021experimental,dolde2014high} and  machine learning \cite{beer2020training, Mitarai2018circuit, Lubasch2020variational}.  Real-world imperfections exist and current quantum computers are inevitable noisy. So it is essential to characterize how much the noise  influence the implementation of quantum operations. The key point is to evaluate the similarity measure between ideal and real operations performed under noisy environments.
Generally, the similarity measure between operations can be induced by that between quantum states. The most prominent measures between  states are the trace distance induced by Schatten 1-norm \cite{Dajka2011distance}, the quantum fidelity  and the quantum relative entropy \cite{Lashkari2014relative}. They are all unitarily invariant, and it is not always desirable for certain applications like quantum error-correction. Recently, the quantum Wasserstein distance between quantum states has been proposed, which recovers the classical Wasserstein distance for quantum states diagonal in the canonical basis   \cite{de2021the}. It derives numerous applications, such as quantum differential privacy \cite{2203.03591}, quantum concentration inequality \cite{DePalma2022concentration}, quantum circuit complexity \cite{2208.06306}. One problem arises when comparing the operations by the distance between a single couple of input and output states. That is, the distance is always state-dependent and  characterizing the operation requires numerous input states \cite{chen2016entangling}. Hence, it is necessary to construct the similarity measure for operations. This is the motivation of this paper.

The  similarity measure between operations can be constructed using the idea of the discrimination of unitary transformations \cite{A2001Statistical}, which is an important application of quantum state discrimination \cite{Joonwoo2015quantum}. 
The  distinguishability between operations can be quantified by a certain distance between two output states. The maximization or average is taken over all input states to make it state-independent. Several distances have been employed to construct the measure, such as  the trace distance \cite{Ariano2001quantum,A2001Statistical}, Schatten 2-norm \cite{Chen2022quantum}, diamond norm \cite{Regula2021operational} and other measures \cite{Gilchrist2005diatance}. The trace distance shows a compelling physical interpretation for the probability in positive operator valued measurement (POVM). The Schatten 2-norm can be efficiently estimated in quantum circuits.  The diamond norm is a widely-used figure of merit to evaluate the threshold for fault-tolerant quantum computation. 
These distances show global distinguishability between quantum states. Neither of them allows to distinguish the states that differ locally nor relate to the circuit complexity. So we focus on the quantum Wasserstein distance between operations, and it will show the above properties.

In this paper, we introduce a similarity measure of unitary operations, named the quantum Wasserstein distance between unitary operations. It is a state-independent measure of the distance between operations and a diagnostic of noise. Induced by the quantum Wasserstein distance between states, it measures distinguishability regarding to extensive and quasilocal observables, and shows an explanation for quantum circuit complexity.  We show the basic properties of the distance, such as faithfulness, symmetry, and right unitary invariance. We investigate the calculation of the distance, and show some analytical results for the distance between the identity and some widely-used unitary operations including CNOT, SWAP, and generalized controlled gate. We show two applications of the distance. First we consider the distance in quantum circuits and apply it to estimate the closeness between two sequences of gates, and show that the noisy operation simulates the ideal operation well when they become close under the distance. Next we introduce the $W_1$ gate error rate by the distance, which quantifies the realization of quantum gates under noisy environment. We establish the relation between the $W_1$ error rate and two real cost measures of recover operation, including circuit cost and experiment cost. Hence the $W_1$ error rate is related to the practical cost of eliminating the effect of noise on a specific type of gate. The lower the  $W_1$ error rate of a noisy gate is, the less it may cost to implement its recovery operation.

Building quantum computers derives a strong need for accurate characterization of the noise in quantum gate implementations. Gate error rate \cite{sanders2015bounding} and  fidelity \cite{nielsen2002fidelity,lu2020direct} are the most widely used figure of merit for the performance of a single quantum gate. Gate fidelity is experimentally convenient, while the connection of that with fault-tolerance requirements is not direct. Gate error rate that induced by diamond norm is an alternative bound that can yield tighter estimates of gate performance. Although no measure can be generally suitable for all quantum information processing tasks, they contribute to understanding and improving specific aspects of the quantum operations. The $W_1$ error rate we proposed characterizes the noisy implementation of quantum gates from the perspective of cost measures for their recovery operations in quantum error-correction.

The rest of the paper is organized as follows. In Sec. \ref{Sec:preliminaries}, we show the notations, some properties of the quantum $W_1$ distance between states, and the formalism of average gate fidelity and error rate. In Sec. \ref{Sec:def, pro, cal}, we show the definition, properties and calculation of quantum $W_1$ distance between operations. In Sec. \ref{Sec:app to qc}, we show an application, i.e. the estimation of the closeness between operations in quantum circuits. In Sec.  \ref{sec:gate error rate}, we introduce the $W_1$ gate error rate with the help of   quantum $W_1$ distance between operations, and show the noisy implementation of arbitrary single-qubit gate and CNOT gate under typical noise scenarios. We conclude in Sec. \ref{Sec:conclusion}.

\section{Preliminaries and notations}
\label{Sec:preliminaries}
In this section, we show the notations and some facts used in this paper. In Sec. \ref{sec:notation}, we present the notations of this paper.  In Sec. \ref{sec:W1 distance between states}, we show the definition and some properties of quantum $W_1$ distance between states. In Sec. \ref{sec:preliminary error rate}, we introduce the derivation of the average gate fidelity and gate error rate induced by different kinds of norms.
\subsection{Notations}
\label{sec:notation}
  We denote the set of traceless,  self-adjoint linear operators by $\cM_n$,  the set of $n$-qudit quantum states  by $\cS_n$, the set of unitary operations acting on $n$-qubit states as $\cU_n$, and by $\cP_n$ the set of the probability distributions on $[d]^n$.

Some well-known single-qubit gate include the Hadamard gate $H=\frac{1}{\sqrt{2}}\bma 1&1\\
1&-1\ema$, and the Pauli matrices
$\s_x=\bma0&1\\1&0\ema$, $\s_y=\bma0&-i\\i&0\ema$, 
$\s_z=\bma1&0\\0&-1\ema$.
 The two-qubit gates include  
the CNOT gate 
$U_{CN}=\bma1&0&0&0\\
0&1&0&0\\
0&0&0&1\\
0&0&1&0\ema$, the controlled-Z gate 
$U_{CZ}=\bma1&0&0&0\\
0&1&0&0\\
0&0&1&0\\
0&0&0&-1\ema$, the SWAP gate
$U_{SW}=\bma
1&0&0&0\\
0&0&1&0\\
0&1&0&0\\
0&0&0&1
\ema$, and the generalized controlled phase gate $U_{CP}=\bma
1&0&0&0\\
0&1&0&0\\
0&0&1&0\\
0&0&0&e^{i\t}
\ema$.
The single-qudit Pauli gate $X=\sum_{q=0}^{d-1}\ket{q\oplus 1}\bra{q}$ and $Z=\sum_{q=0}^{d-1}\og^q\ket{q}\bra{q}$, for $\og=e^{2\pi i/d}$.

The Schatten $p$-norm for arbitrary matrix $A\in \bbC^{N\times M}$ and $p\in [1,\infty)$ is defined as $\norm{A}_p=\big[\tr(A^\dg A)^{\frac{p}{2}}\big]^\frac{1}{p}$. By setting $p=1$ into the definition of $\norm{\cdot}_p$, one has the Schatten 1-norm (trace norm) given by $\norm{A}_1=\tr\sqrt{A^\dg A}$, which is equal to the sum of singular values of $A$. For two states $\r,\s\in\cS_n$, $\norm{\r-\s}_1$ is typically denoted as the trace distance between $\r$ and $\s$.

\subsection{The quantum Wasserstein distance of order 1 between states}
\label{sec:W1 distance between states}
The well-known similarity measures between quantum states including the trace distance, quantum fidelity and relative entropy are all unitarily invariant. They characterize the global distinguishability of states. For certain applications, such as quantum error correction and quantum machine learning, it is desirable to use the distance with respect to which the state $\ket{0}^{\otimes n}$ is much closer to $\ket{1}\otimes \ket{0}^{\otimes (n-1)}$ than $\ket{1}^{\otimes n}$. Such a distance is called  the quantum  Wasserstein distance of order 1 \cite{de2021the}. For convenience, we denote quantum Wasserstein distance of order 1 as the quantum $W_1$ distance in the context. It can recover the hamming distance for vectors of the canonical basis, and more generally robustness against local perturbations on the input states.

We show the definition and some important properties of the quantum Wasserstein distance, which will be used in this paper.
First, we show some basic definitions. 
 The quantum Wasserstein norm of order 1 is a kind of unique norm on $\cM_n$. It is defined as follows \cite{de2021the},
 \begin{definition}
 	We define the quantum $W_1$ norm on $\cM_n$ as, for any $X\in\cM_n$,
 \begin{eqnarray}
 \norm{X}_{W_1}=\frac{1}{2}\min \bigg(\sum_{i=1}^n\norm{X^{(i)}}_1:X^{(i)}\in\cM_n, \tr_iX^{(i)}=0, X=\sum_{i=1}^nX^{(i)} 
 \bigg),
 \end{eqnarray}
	where $\tr_i [\cdot]$ denotes the partial trace over the i-th subsystem.
 \end{definition}
Following the quantum $W_1$ norm, the quantum Wasserstein distance  between states is naturally obtained \cite{de2021the}.
\begin{definition}
	The quantum Wasserstein distance of order 1 between two quantum states $\r$, $\s$ is defined as,
	\begin{eqnarray}
		\label{def:w1diatance}
		W_1(\r,\s)=&&\norm{\r-\s}_{W_1}\nonumber\\
		=&&\min \left\{\sum_{i=1}^n c_i: c_i\geq0, \r-\s=\sum_{i=1}^nc_i(\r^{(i)}-\s^{(i)}), \r^{(i)},\s^{(i)}\in\cS_n, \tr_i \r^{(i)}=\tr_i\s^{(i)} \right\}.
	\end{eqnarray} 
\end{definition}
Next we list the properties of the available quantum $W_1$ distance. They will be used for the derivation and applications of the quantum $W_1$ distance between operations.

The following fact shows that the quantum $W_1$ norm keeps the same upper and lower bounds in terms of the trace norm as its classical counterpart. It establishes the relation between the quantum $W_1$ norm and other Schatten $p$-norms.
\begin{lemma}
	\label{le:relation with trace norm}
	(relation with the trace norm, \cite{de2021the}) For any $X\in\cM_n$, 
	\begin{eqnarray}
		\frac{1}{2}\norm{X}_1\leq\norm{X}_{W_1}\leq\frac{n}{2}\norm{X}_{1}.
	\end{eqnarray}	
	Moreover, if $\tr_iX=0$ for some $i\in[n]$, then 
	\begin{eqnarray}
		\label{eq:w1normX}	\norm{X}_{W_1}=\frac{1}{2}\norm{X}_{1},
	\end{eqnarray}
	i.e., for any $\r,\s\in\cS_n$ such that $\tr_i\r=\tr_i\s$ for some $i\in[n]$,
	\begin{eqnarray}
		\label{eq:w1normr-s}
		\norm{\r-\s}_{W_1}=\frac{1}{2}\norm{\r-\s}_1.
	\end{eqnarray}
\end{lemma}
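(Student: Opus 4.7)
The plan is to split the lemma into three parts: the lower bound $\frac{1}{2}\|X\|_1 \leq \|X\|_{W_1}$, the upper bound $\|X\|_{W_1} \leq \frac{n}{2}\|X\|_1$, and the equality case when some partial trace of $X$ vanishes. The lower bound will be immediate from the triangle inequality, the equality case will fall out of the upper bound applied to a trivial one-term decomposition, so the real work is to produce an explicit decomposition achieving the general upper bound.

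For the lower bound, I would take an arbitrary feasible decomposition $X = \sum_{i=1}^n X^{(i)}$ with $X^{(i)} \in \cM_n$ and $\tr_i X^{(i)} = 0$, apply the triangle inequality for the trace norm to obtain $\|X\|_1 \leq \sum_i \|X^{(i)}\|_1$, and take the infimum on the right-hand side over all admissible decompositions.

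For the upper bound, I would first split $X$ into its positive and negative parts $X = X_+ - X_-$ with orthogonal supports; since $X$ is traceless, $c := \tr X_+ = \tr X_- = \frac{1}{2}\|X\|_1$, so $X = c(\rho - \sigma)$ with $\rho = X_+/c$ and $\sigma = X_-/c$ in $\cS_n$ (the case $X=0$ is trivial). Then I would interpolate between $\rho$ and $\sigma$ using the product states
\begin{equation}
\rho_i = \sigma_{\{1,\ldots,i\}} \otimes \rho_{\{i+1,\ldots,n\}}, \qquad i = 0, 1, \ldots, n,
\end{equation}
where the subscripts denote the reduced density operators on the indicated subsystems. A short check shows $\rho_0 = \rho$, $\rho_n = \sigma$, and $\tr_i \rho_{i-1} = \tr_i \rho_i = \sigma_{\{1,\ldots,i-1\}}\otimes \rho_{\{i+1,\ldots,n\}}$ by consistency of marginals. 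Setting $X^{(i)} := c(\rho_{i-1} - \rho_i)$ then yields an admissible decomposition, and since $\|\rho_{i-1}-\rho_i\|_1 \leq 2$ the total trace norm is bounded by $2cn$, giving $\|X\|_{W_1} \leq \frac{n}{2}\|X\|_1$.

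Finally, when $\tr_i X = 0$ for some $i$, the one-term decomposition $X^{(i)} = X$ and $X^{(j)} = 0$ for $j \neq i$ is admissible and gives $\|X\|_{W_1} \leq \frac{1}{2}\|X\|_1$; combined with the lower bound this forces the equality \eqref{eq:w1normX}, and substituting $X = \rho - \sigma$ produces \eqref{eq:w1normr-s}. The main conceptual obstacle is locating an interpolating sequence between $\rho$ and $\sigma$ whose consecutive members differ only on the $i$-th subsystem while the total-variation budget remains linear in $n$; once the product-marginal ansatz above is identified, the marginal-matching property is a one-line calculation, but without this insight a naive telescoping through partial traces loses an additional factor of $2$.
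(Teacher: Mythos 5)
The paper does not prove this lemma: it is stated as a known fact imported from \cite{de2021the}, so there is no in-paper proof to compare against. Your argument is correct and self-contained, and it essentially reconstructs the proof from the cited reference: the triangle inequality for the trace norm gives the lower bound after minimizing over admissible decompositions; the Jordan decomposition $X=c(\rho-\sigma)$ together with the interpolating product states $\rho_i=\sigma_{\{1,\ldots,i\}}\otimes\rho_{\{i+1,\ldots,n\}}$ yields a feasible decomposition with $\tr_i X^{(i)}=0$ and $\|X^{(i)}\|_1\le 2c$, giving the factor $n/2$; and the one-term decomposition handles the case $\tr_i X=0$. The marginal-matching check $\tr_i\rho_{i-1}=\tr_i\rho_i=\sigma_{\{1,\ldots,i-1\}}\otimes\rho_{\{i+1,\ldots,n\}}$ is exactly right, and the degenerate case $X=0$ is correctly set aside. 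No gaps.
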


Lemma \ref{le:tensorization} and Corollary \ref{cor:geqnorm1} show that the quantum $W_1$ distance is additive with respect to the tensor product and its counterpart. This property can not be satisfied by the trace distance. In this paper, they are used for calculating the quantum $W_1$ distance between operations and deriving its properties. 
\begin{lemma}
	\label{le:tensorization}
(tensorization, \cite{de2021the}) For any  $X\in\cM_n$, 
\begin{eqnarray}
	\norm{X}_{W_1}\geq\norm{\tr_{m+1...m+n}X}_{W_1}+\norm{\tr_{1...m}X}_{W_1},
\end{eqnarray}
and for any $n$-qudit states $\r,\s\in\cS_{m+n}$,
\begin{eqnarray}
	\norm{\r-\s}_{W_1}\geq
	\norm{\r_{1...m}-\s_{1...m}}_{W_1}+	\norm{\r_{m+1...m+n}-\s_{m+1...m+n}}_{W_1}.
\end{eqnarray}
Moreover, for any $\r',\s'\in\cS_m$ and $\r'',\s''\in\cS_n$, 
\begin{eqnarray}
	\label{eq:r'Or''}
	\norm{\r'\otimes \r''-\s'\otimes\s''}_{W_1}
	=\norm{\r'-\s'}_{W_1}+\norm{\r''-\s''}_{W_1}.\end{eqnarray}
\end{lemma}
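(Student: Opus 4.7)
The plan is to establish the three assertions in sequence: first the operator-level subadditivity, then its specialization to differences of states, and finally the upgrade to an equality in the tensor-product case via a matching upper bound. I would not attack the last identity directly, since $\norm{\cdot}_{W_1}$ is defined by an infimum and equalities between infima are normally obtained by sandwiching one between two inequalities in opposite directions.

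For the first assertion I would start from an optimal decomposition $X = \sum_{i=1}^{m+n} X^{(i)}$ realizing $\norm{X}_{W_1} = \frac{1}{2}\sum_i \norm{X^{(i)}}_1$ with $\tr_i X^{(i)} = 0$. The crucial observation is that when $i \leq m$ the partial trace $\tr_{1\ldots m}$ already contains $\tr_i$, so $\tr_{1\ldots m} X^{(i)} = 0$; symmetrically, $\tr_{m+1\ldots m+n} X^{(i)} = 0$ whenever $i > m$. Hence
\begin{align}
\tr_{m+1\ldots m+n} X = \sum_{i=1}^{m} \tr_{m+1\ldots m+n} X^{(i)}, \qquad \tr_{1\ldots m} X = \sum_{i=m+1}^{m+n} \tr_{1\ldots m} X^{(i)},
\end{align}
and (after relabelling indices) each summand on the right is a feasible piece in the definition of the $W_1$ norm on the corresponding reduced subsystem. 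Monotonicity of $\norm{\cdot}_1$ under partial traces then yields $\norm{\tr_{1\ldots m} X}_{W_1} + \norm{\tr_{m+1\ldots m+n} X}_{W_1} \leq \frac{1}{2}\sum_i \norm{X^{(i)}}_1 = \norm{X}_{W_1}$. The second assertion is then immediate upon specializing to $X = \rho-\sigma$, since the partial traces of a difference of states are again differences of the corresponding marginals.

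For the third assertion one direction comes for free from the second applied to $X = \rho'\otimes\rho'' - \sigma'\otimes\sigma''$, whose partial traces are exactly $\rho''-\sigma''$ and $\rho'-\sigma'$. For the reverse inequality I would fix optimal state decompositions
\begin{align}
\rho'-\sigma' = \sum_i c_i'(\alpha_i-\beta_i), \qquad \rho''-\sigma'' = \sum_j c_j''(\gamma_j-\delta_j),
\end{align}
and use the telescoping identity
\begin{align}
\rho'\otimes\rho''-\sigma'\otimes\sigma'' = (\rho'-\sigma')\otimes\rho'' + \sigma'\otimes(\rho''-\sigma'').
\end{align}
Substituting produces a signed combination whose terms $c_i'(\alpha_i\otimes\rho''-\beta_i\otimes\rho'')$ have matching marginals on qudit $i$, and whose terms $c_j''(\sigma'\otimes\gamma_j-\sigma'\otimes\delta_j)$ have matching marginals on qudit $m+j$. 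This exhibits a feasible decomposition of total weight $\sum_i c_i' + \sum_j c_j''$, and taking infima yields the matching upper bound.

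The main obstacle is the bookkeeping in the first step: one must be careful that partial traces of the admissible summands $X^{(i)}$ remain admissible on the smaller system. Once the observation $\tr_i X^{(i)} = 0 \Rightarrow \tr_{1\ldots m} X^{(i)} = 0$ for $i \leq m$ is secured, everything else reduces to the triangle inequality for $\norm{\cdot}_1$ and direct use of the definition of $\norm{\cdot}_{W_1}$; no further analytic input is needed.
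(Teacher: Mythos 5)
This lemma is quoted from Ref.~\cite{de2021the} and the paper gives no proof of its own, so there is nothing in-paper to compare against; your argument must therefore stand on its own, and it does. The superadditivity step is sound: from an optimal decomposition $X=\sum_i X^{(i)}$ with $\tr_i X^{(i)}=0$, the pieces with $i\le m$ (resp.\ $i>m$) are annihilated by $\tr_{1\ldots m}$ (resp.\ $\tr_{m+1\ldots m+n}$), the surviving partial traces remain admissible on the reduced system because partial traces commute, and contractivity of $\norm{\cdot}_1$ under partial trace closes the estimate. The equality for product states is correctly obtained by sandwiching, with the telescoping identity $\r'\otimes\r''-\s'\otimes\s''=(\r'-\s')\otimes\r''+\s'\otimes(\r''-\s'')$ supplying an explicit feasible decomposition of weight $\norm{\r'-\s'}_{W_1}+\norm{\r''-\s''}_{W_1}$. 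This is essentially the standard argument of the cited reference; the only cosmetic issue is that the statement's ``$X\in\cM_n$'' should read $X\in\cM_{m+n}$, which you implicitly and correctly assume.
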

 
\begin{corollary}
	\label{cor:geqnorm1}
	(lower bound for $W_1$ distance, \cite{de2021the})
 For any $\r,\s\in\cS_n$,
\begin{eqnarray}
	\label{ieq:tensorization}
	\norm{\r-\s}_{W_1}\geq\frac{1}{2}\sum_{i=1}^n\norm{\r_i-\s_i}_1,
\end{eqnarray}		and equality holds whenever both $\r$ and $\s$ are product states.
\end{corollary}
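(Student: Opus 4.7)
The plan is to derive the inequality by iterating the tensorization bound of Lemma \ref{le:tensorization} down to single-qudit marginals, and then to reduce the one-site $W_1$ norm to half the trace norm via equation (\ref{eq:w1normr-s}) of Lemma \ref{le:relation with trace norm}. The equality claim for product states will come from the same iteration, but applied with the \emph{equality} version (\ref{eq:r'Or''}) instead of the inequality.

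First I would peel off subsystems one at a time. Applying Lemma \ref{le:tensorization} with $m=1$ gives
\begin{eqnarray}
\norm{\r-\s}_{W_1}\ \geq\ \norm{\r_1-\s_1}_{W_1}+\norm{\r_{2\ldots n}-\s_{2\ldots n}}_{W_1},
\end{eqnarray}
and then the same lemma applied to the $(n-1)$-qudit reduced states $\r_{2\ldots n},\s_{2\ldots n}\in\cS_{n-1}$ peels off the second site, and so on. By a straightforward induction on $n$ one obtains
\begin{eqnarray}
\norm{\r-\s}_{W_1}\ \geq\ \sum_{i=1}^n\norm{\r_i-\s_i}_{W_1}.
\end{eqnarray}

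Next I would invoke Lemma \ref{le:relation with trace norm} site by site. Each $\r_i-\s_i$ lives on a single qudit, so $\tr_i(\r_i-\s_i)=\tr(\r_i)-\tr(\s_i)=0$ and (\ref{eq:w1normr-s}) yields $\norm{\r_i-\s_i}_{W_1}=\tfrac{1}{2}\norm{\r_i-\s_i}_1$. Substituting produces exactly the asserted bound (\ref{ieq:tensorization}).

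For the equality statement, assume $\r=\r_1\otimes\cdots\otimes\r_n$ and $\s=\s_1\otimes\cdots\otimes\s_n$. I would redo the same peeling, but now using the product-state identity (\ref{eq:r'Or''}) with $\r'=\r_1$, $\r''=\r_2\otimes\cdots\otimes\r_n$ and similarly for $\s$, so that the inequality in the first displayed line above becomes an equality. Iterating $n-1$ times and then applying (\ref{eq:w1normr-s}) on each single-qudit factor gives $\norm{\r-\s}_{W_1}=\tfrac{1}{2}\sum_{i=1}^n\norm{\r_i-\s_i}_1$, which saturates (\ref{ieq:tensorization}).

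There is no real obstacle here; the only thing to be careful about is the base case of the induction and the verification that the marginals of $\r_1\otimes\cdots\otimes\r_n$ and $\s_1\otimes\cdots\otimes\s_n$ really are $\r_i$ and $\s_i$, so that (\ref{eq:r'Or''}) applies cleanly at each splitting step. Everything else is a direct consequence of the two lemmas already established.
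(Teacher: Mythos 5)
Your proposal is correct. Note that the paper does not actually prove Corollary \ref{cor:geqnorm1} itself --- it is quoted from \cite{de2021the} as a known fact --- and your derivation (iterate the tensorization inequality of Lemma \ref{le:tensorization} down to single-site marginals, apply Eq.~(\ref{eq:w1normr-s}) on each qudit, and replace the inequality by the product identity (\ref{eq:r'Or''}) to get equality for product states) is precisely the standard argument in that reference, so there is nothing to add.
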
 
 
The following  observation states that the quantum $W_1$ distance recovers the classical $W_1$ distance for the quantum states diagonal in the canonical basis. It contributes to the calculation of the quantum $W_1$ distance between operations.
\begin{lemma}
	\label{le: recovery of classical W1}
(recovery of the classical $W_1$ distance, \cite{de2021the}) Let $p,q\in\cP_n$, and let
\begin{eqnarray}
	\r=\sum_{x\in[d]^n}p(x)\ketbra{x}{x},
	\quad
	\s=\sum_{y\in[d]^n}q(y)\ketbra{y}{y}.
\end{eqnarray}
Then,
\begin{eqnarray}
	\label{eq:recoveryofW1}
	\norm{\r-\s}_{W_1}=W_1(p,q).
\end{eqnarray}
In particular, the quantum $W_1$ distance between vectors of the canonical basis coincides with the Hamming distance:
\begin{eqnarray}
	\label{eq:projx-projy}
	\norm{\ketbra{x}{x}-\ketbra{y}{y}}_{W_1}=h(x,y),
	\quad
	x,y\in[d]^n.
\end{eqnarray}
Here the Hamming distance between $x,y\in[d]^n$ is the number of different components:
\begin{eqnarray}
	h(x,y)=\abs{\{i\in[n]:x_i\neq y_i\}},
\end{eqnarray}
\end{lemma}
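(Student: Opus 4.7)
The plan is to prove the identity $\norm{\r-\s}_{W_1} = W_1(p,q)$ by two matching inequalities, after which the Hamming-distance statement \eqref{eq:projx-projy} follows instantly by specialization to point masses $p=\delta_x$, $q=\delta_y$. Throughout, write $X=\r-\s\in\cM_n$ (diagonal in the canonical basis) and let $\cD$ denote the canonical-basis pinching $\cD(A)=\sum_z\bra{z}A\ket{z}\ketbra{z}{z}$, which is a convex combination of conjugations by diagonal unitaries and satisfies $\tr_i\circ\cD = \cD_{[n]\setminus\{i\}}\circ\tr_i$.

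For the upper bound $\norm{\r-\s}_{W_1}\leq W_1(p,q)$, I would begin from an optimal classical coupling $\pi$ of $p$ and $q$. For each pair $(x,y)$ with $\pi(x,y)>0$, fix a geodesic $x=z^{(0)},z^{(1)},\dots,z^{(h(x,y))}=y$ in the Hamming graph, so that consecutive vertices differ in a single coordinate $i_k\in[n]$. Telescoping gives
\begin{equation*}
\ketbra{x}{x}-\ketbra{y}{y}=\sum_{k}\bigl(\ketbra{z^{(k)}}{z^{(k)}}-\ketbra{z^{(k+1)}}{z^{(k+1)}}\bigr);
\end{equation*}
multiplying by $\pi(x,y)$, summing over $(x,y)$, and regrouping the elementary terms according to the coordinate $i_k$ being altered, I obtain a decomposition $X=\sum_{i=1}^n X^{(i)}$ in which each $X^{(i)}$ is diagonal and satisfies $\tr_i X^{(i)}=0$ (because every elementary difference has this property for its assigned coordinate). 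Since each elementary step contributes at most $2\pi(x,y)$ to $\sum_i\norm{X^{(i)}}_1$, this yields $\sum_i\norm{X^{(i)}}_1\leq 2\sum_{x,y}\pi(x,y)h(x,y)=2W_1(p,q)$, and the definition of $\norm{\cdot}_{W_1}$ closes the bound.

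For the lower bound, I would take any feasible decomposition $X=\sum_i X^{(i)}$ with $\tr_i X^{(i)}=0$ and apply $\cD$ termwise. Since $\cD$ is a trace-norm contraction and commutes with partial traces in the canonical basis, $\{\cD(X^{(i)})\}$ is still a feasible decomposition of $\cD(X)=X$ with cost no larger than the original, so I may assume each $X^{(i)}$ is already diagonal, $X^{(i)}=\sum_z \mu_i(z)\ketbra{z}{z}$ with $\sum_{z_i}\mu_i(z_{-i},z_i)=0$ for every $z_{-i}$. What remains is a purely classical claim: $W_1(p,q)\leq\tfrac12\sum_i\norm{\mu_i}_1$ for every such signed decomposition $p-q=\sum_i\mu_i$. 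I would prove it by Kantorovich--Rubinstein duality: testing against an arbitrary $1$-Lipschitz (in Hamming metric) function $f$, the zero-marginal condition on $\mu_i$ allows me to replace $f$ inside $\sum_z f(z)\mu_i(z)$ by its average over the $i$-th coordinate without altering the sum, and the replaced function has oscillation at most $1$ over every coordinate-$i$ fiber, giving $\sum_z f(z)\mu_i(z)\leq\tfrac12\norm{\mu_i}_1$; summing over $i$ and taking the supremum over $f$ yields $W_1(p,q)\leq\tfrac12\sum_i\norm{\mu_i}_1$.

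The hardest step, I expect, is the classical lower bound: the Kantorovich dualization must be executed with the exact constant $\tfrac12$, and some care is needed to verify that averaging $f$ over a single coordinate genuinely halves its Hamming oscillation. A safeguard I would keep in reserve is a direct primal construction that recycles the signed $\{\mu_i\}$ into an honest coupling of $p$ and $q$ by routing local imbalances one coordinate at a time, which avoids duality at the cost of an inductive bookkeeping argument. As a sanity check, when $x$ and $y$ differ in one coordinate, both sides of \eqref{eq:projx-projy} equal $1$, in agreement with Lemma \ref{le:relation with trace norm} via \eqref{eq:w1normr-s}; this consistency, together with \eqref{eq:r'Or''} applied to products of Kronecker deltas, provides useful local tests of the argument.
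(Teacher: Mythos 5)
The paper offers no proof of this lemma at all --- it is imported verbatim from \cite{de2021the} --- so there is nothing internal to compare against; your argument is a correct, self-contained reconstruction along the same lines as that reference. Both halves are sound: the upper bound by telescoping an optimal coupling along Hamming geodesics and sorting the rank-one differences by the coordinate they alter (each such difference has trace norm $2$ and vanishing partial trace over that coordinate), and the lower bound by pinching a feasible decomposition to the diagonal (the pinching is a mixture of diagonal unitary conjugations, hence trace-norm contractive, and intertwines with partial traces exactly as you state) followed by Kantorovich--Rubinstein duality. The one loose phrase is ``replace $f$ by its average over the $i$-th coordinate'': for $d>2$ subtracting the fiber \emph{mean} does not force $\|f-g\|_\infty\le\tfrac12$ (the mean can sit off-center in a fiber of oscillation $1$). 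What you actually need is the fiber \emph{midpoint} $g(z_{-i})=\tfrac12\bigl(\max_{z_i}f+\min_{z_i}f\bigr)$; this is legitimate because the zero-marginal condition on $\mu_i$ lets you subtract \emph{any} function of $z_{-i}$ without changing $\sum_z f(z)\mu_i(z)$, and $1$-Lipschitzness in the Hamming metric bounds each fiber's oscillation by $1$, so the centered function has sup-norm at most $\tfrac12$ and H\"older gives $\sum_z \tilde f(z)\mu_i(z)\le\tfrac12\norm{\mu_i}_1$ with the exact constant you were worried about.
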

where $x=(x_1,...,x_n)^T$ and  $y=(y_1,...,y_n)^T$.

\subsection{Average gate fidelity and gate error rate}
\label{sec:preliminary error rate}
In practice, quantum gates can be hardly isolated from the environment and the gate-noise interaction can transform the ideal gate  into actual gate.  The implementation of actual gate may lead to information leakage from the quantum system. So it is important to estimate how much the actual gates can affect the states in the system. Average gate fidelity \cite{nielsen2002fidelity} is firstly proposed to accomplish such a task. 
Suppose the ideal quantum gate $U$ acts on the input state $\r\in \cS_n$ and performs the action $\r\rightarrow\r_{id}=U\r U^\dg$. The actual gate implemented on the input state is denoted by the channel $\cV$, which acts as $\r\rightarrow \r_{ac}=\cV(\r)$. Averaging over pure state input with respect to the Haar measure derives the average gate fidelity,
\begin{eqnarray}
\eta:=\int d\mu(\r)\tr[(U\r U^\dg)\cV(\r)].
\end{eqnarray} 
By now, the relation between average gate fidelity and fault tolerance requirements is insufficient. To overcome this problem,  gate error rate \cite{Fuchs1999Cryptographic,magesan2011scalable,sanders2015bounding} is proposed, whose upper bound is an appropriate measure to assess progress towards fault tolerant quantum computation. 
The gate error rate can be derived by the error rate of probability distributions $d(p_{id}, p_{ac})=\frac{1}{2}\sum_{x\in X}\abs{p_{id}(x)-p_{ac}(x)}$, where $p_{id}(x)$ ($p_{ac}(x)$) corresponds to the probability of a ideal (actual) output over the set of all possible outcomes. 
 We compare these states with the help of POVM $\{M_m\}$. The error rate of this measurement is $d(P_U^{(m)}, P_V^{(m)})$, where  $P_U^{(m)}=\tr(M_m\r_{id})$ and $P_V^{(m)}=\tr(M_m\r_{ac})$. Taking the maximization of $d(P_U^{(m)}, P_V^{(m)})$ over all possible choices of measurement, the following probability error rate induced by Schatten $1$-norm  is obtained,
$d_1(\r_{ac},\r_{id}):=\frac{1}{2}\norm{\r_{ac}-\r_{id}}_1$.
The error rate $U$ can be defined by taking the maximization over all input states as \cite{Fuchs1999Cryptographic} 
\begin{eqnarray}
	\label{def:e1(u,v)}
	e_1(U,\cV):=\max_\r d_1(\r_{ac},\r_{id})=\frac{1}{2}\max_{\r}\norm{U\r U^\dg-\cV(\r)}_1.	
\end{eqnarray}
Amending the above definition by maximizing over inputs $\r$ and ancillary spaces using the diamond norm, another definition of gate error rate is derived \cite{sanders2015bounding},
\begin{eqnarray}
	\label{def:ed(u,v)}
	e_\diamond(U,\cV):=\frac{1}{2}\norm{U\r U^\dg-\cV(\r)}_\diamond.
\end{eqnarray} 
Note that the average gate fidelity and gate error rate is not directly connected. The reported fidelity alone implies loose bounds on the gate error rate. The tighter bounds or more direct relation with fault tolerance computation is possible when choosing other performance measures. Following this idea, we will propose the $W_1$ error rate by the $W_1$ distance between quantum states, and establish its lower bounds with the help of $W_1$ distance between operations. It will be presented in Sec. \ref{sec:gate error rate}.
\section{The definition, properties and calculation of $\cD(U,V)$}
\label{Sec:def, pro, cal}
In this section, we propose the quantum $W_1$ distance between unitary operations $U,V$ by the quantum $W_1$ norm, where $U$ and $V$ are two unitary operations acting on the same state space. In Sec. \ref{sec:properties}, we show some basic properties of $\cD(U,V)$. In Sec. \ref{sec:calculation}, we show some analytical calculations of the distance including the distance between the identity and some widely-used unitary operations.

We show the definition of $\cD(U,V)$. It is given by taking the maximization over all states in terms of the quantum Wasserstein distance in Definition \ref{def:w1diatance}.
\begin{definition}
	\label{def:D(U,V)}
Given two unitary operations $U,V\in\cU_n$ acting on the $n$-qudit state, their quantum Wasserstein distance $\cD(U,V): \cU_n\times \cU_n\rightarrow \bbR$ is the maximal quantum $W_1$ distance between the states  they have performed on,
\begin{eqnarray}
	\cD(U,V)=\max_{\r\in\cS_n}\norm{U\r U^\dg-V\r V^\dg}_{W_1}.
\end{eqnarray}
By the convexity of the quantum $W_1$ norm, we need only take the maximization over all pure states,
\begin{eqnarray}
	\label{def:E(U,V)}
	\cD(U,V)=\max_{\proj{\psi}\in\cS_n}\norm{U\ketbra{\psi}{\psi}U^\dg-V\ketbra{\psi}{\psi}V^\dg}_{W_1},
\end{eqnarray}
where the maximum is over all normalized states $\ket{\psi}$ in the state space $\cS_n$.  
\end{definition}

The quantum $W_1$  distance above shows the explanation for quantum circuit complexity \cite{2208.06306}. That is, the distance $\cD(U,V)$ shows the lower bound for the minimum number of gates (smallest circuit) that is required to transform operations $U$ and $V$ to each other. This property of $\cD(U,V)$  will be utilized in Sec. \ref{sec:gate error rate}.

On the other hand, it has been shown that the quantum $W_1$ distance between states allows to distinguish quantum states that differ locally in \cite{de2021the}. We show that quantum $W_1$ distance between operations characterizes local distinguishability of operations in multi-qubit scenario, as it is induced by $\norm{\cdot}_{W_1}$. Other distance induced by   
the measure that is unitarily invariant can not show such a property \cite{de2021the}. So the quantum $W_1$ distance between operations is a unique distance showing the local difference of operations.
 We illustrate the above viewpoint by an example for the two-qudit operation.   One can obtain that  $\cD(I^{\otimes 2},I\otimes X)=1$, see Proposition \ref{pro: D(I,sx)=k}.  Adding the Pauli gate $X$ locally on the second qudit increases the $W_1$ distance between identity and the total operation, i.e. $\cD(I^{\otimes 2},X^{\otimes 2})=2$. So the local distinguishability between the operations can be characterized. Such local difference can not be detected by other distances such as the Schatten 1-norm or fidelity. In fact, we have $\max_\r\norm{\r-(I\otimes X)\r(I\otimes X)}_1=\max_\r\norm{\r-(X\otimes X)\r(X\otimes  X)}_1=2$, and $\min_\r F(\r,(I\otimes X)\r(I\otimes X))=\min_\r F(\r,(X\otimes X)\r(X\otimes X))=0$. 
 The quantum $W_1$ distance  between nonlocal operations can also characterize their local property. For example, we consider the distance for two nonlocal qubit gates CNOT and $Q=(I\otimes \s_x)U_{CN}(\s_x\otimes I)$ that are locally different. We obtain that their difference can be characterized by the $W_1$ distance, i.e.   $\cD(I,\rm CNOT)=\sqrt{2}$ and $\cD(I,Q)=2$, see Propositions \ref{pro:D(I,CNOT)} and \ref{pro:permutation matrix}. Their local distinguishability can not described by other distance, as we have $\max_\r\norm{\r-U_{CN}\r U_{CN}}_1=\max_\r\norm{\r-Q\r Q^\dg}_1=2$  and $\min_\r F(\r,U_{CN}\r U_{CN})=\min_\r F(\r,Q\r Q^\dg)=0$.

\subsection{Some properties of $\cD(U,V)$}
\label{sec:properties}
For the convenience of deriving the applications of the quantum $W_1$ distance between operations, we present some basic properties of it and show the proof as follows. 
\begin{proposition}
	\label{pro:property}
The quantum Wasserstein distance $\cD(U,V)$ between unitary operations $U$ and $V$ satisfies the following properties:
\begin{enumerate}
	\item 
	\label{eq:faithfulness}
	Faithfulness: $\cD(U,V)=0$ if and only if $U=V$;
	
	\item 
	\label{eq:symmetry}
	Symmetry:  $\cD(U,V)=\cD(V,U)$;
	
	\item 
	\label{eq:triangle}
	Triangle inequality: $\cD(U,V)\leq \cD(U,M)+\cD(M,V)$, for $U,V,M\in\cU_n$;
	
	\item 
		\label{eq:rightinvariant}
	Right unitary invariance: $\cD(UM,VM)=\cD(U,V)$, for $U,V,M\in\cU_n$;
	
	\item
	\label{eq:leftinvariant}
	$\cD(NU,NV)=\cD(U,V)$, for $U,V\in\cU_n$ and $N\in \cU_1^{\otimes n}$;
	 	\item 
	 \label{eq:bounds}
	 Bounds: $0\leq \cD(U,V)\leq n$, for $U,V\in\cU_n$;

	 \item 
	 \label{eq:E(I,Udg)}
	Conjugate transpose invariance with identity: $\cD(I,U)=\cD(I,U^\dg)$;
	 
	\item 
	\label{eq:E(U_2U_1,V_2V_1)leq}
	$\cD(U_2U_1,V_2V_1)\leq \cD(U_1^\dg,U_2)+\cD(V_1^\dg,V_2)$;
	
	\item  
	\label{eq:superadditivity}
	Superadditivity under tensorization: $\cD(U_1\otimes U_2,V_1\otimes V_2)\geq \cD(U_1,V_1)+\cD( U_2, V_2)$;
	
	\item 
	\label{eq:E(U1OU2,V1OV2)}
	$\cD(U_1\otimes U_2,V_1\otimes V_2)\leq \cD(U_1\otimes I,V_1\otimes I)+\cD(I\otimes U_2, I\otimes V_2)$.
\end{enumerate}
\end{proposition}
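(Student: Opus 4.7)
The plan is to handle the ten items by reducing each to a direct property of $\|\cdot\|_{W_1}$ on $\cM_n$ or to one of the lemmas in Sec.~\ref{sec:W1 distance between states}. The items group naturally: (1)--(3) are norm axioms transported through the $\max_\rho$; (4), (5), (7) follow from unitary invariance of the underlying norm; (6) uses Lemma \ref{le:relation with trace norm}; (8) and (10) are triangle-inequality arguments with carefully chosen midpoints; and (9) is the one item that requires the product-state additivity from Lemma \ref{le:tensorization}.

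I would first dispatch the easy items. For (1), faithfulness of $\|\cdot\|_{W_1}$ forces $U\rho U^\dag=V\rho V^\dag$ for every $\rho\in\cS_n$, so $V^\dag U$ commutes with every state and must be a scalar, i.e.\ $U=V$ up to a global phase. Items (2) and (3) come from pointwise-in-$\rho$ applications of the corresponding norm axioms before taking the maximum. For (4), the substitution $\sigma=M\rho M^\dag$ is a bijection of $\cS_n$. For (5), the key fact to verify is $\|NXN^\dag\|_{W_1}=\|X\|_{W_1}$ for every $N\in\cU_1^{\otimes n}$: conjugation by such $N$ sends each admissible decomposition $X=\sum_i X^{(i)}$ with $\tr_i X^{(i)}=0$ to another admissible decomposition $NXN^\dag=\sum_i NX^{(i)}N^\dag$ with the same individual trace norms (since the trace norm is unitarily invariant and $\tr_i(NX^{(i)}N^\dag)=0$ when $N$ is local), so the infimum defining $\|\cdot\|_{W_1}$ is preserved; (5) then follows as in (4). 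Item (6) follows from Lemma \ref{le:relation with trace norm} via $\|U\rho U^\dag-V\rho V^\dag\|_{W_1}\leq \frac{n}{2}\|U\rho U^\dag-V\rho V^\dag\|_1\leq n$, while (7) combines (4) with $M=U^\dag$ and then (2) to give $\cD(I,U)=\cD(U^\dag,I)=\cD(I,U^\dag)$.

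For (8), I would first use right invariance to rewrite $\cD(U_1^\dag,U_2)=\cD(I,U_2U_1)$ and $\cD(V_1^\dag,V_2)=\cD(I,V_2V_1)$; the claim then reduces, via (2), to the triangle inequality $\cD(U_2U_1,V_2V_1)\leq \cD(U_2U_1,I)+\cD(I,V_2V_1)$ from (3). For (9), I would restrict the maximum in $\cD(U_1\otimes U_2,V_1\otimes V_2)$ to product pure inputs $|\psi_1\rangle\otimes|\psi_2\rangle$; the output difference then has the product form $\rho'\otimes\rho''-\sigma'\otimes\sigma''$, so the equality (\ref{eq:r'Or''}) of Lemma \ref{le:tensorization} splits its $W_1$ norm as $\|\rho'-\sigma'\|_{W_1}+\|\rho''-\sigma''\|_{W_1}$, and maximizing independently in $|\psi_1\rangle$ and $|\psi_2\rangle$ yields $\cD(U_1,V_1)+\cD(U_2,V_2)$. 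Finally, for (10), I would insert the intermediate operation $V_1\otimes U_2$ in the triangle inequality: right invariance with $M=I\otimes U_2^\dag$ gives $\cD(U_1\otimes U_2,V_1\otimes U_2)=\cD(U_1\otimes I,V_1\otimes I)$, and right invariance with $M=V_1^\dag\otimes I$ gives $\cD(V_1\otimes U_2,V_1\otimes V_2)=\cD(I\otimes U_2,I\otimes V_2)$. The main technical points I expect will require care are the global-phase caveat in (1), the transformation of admissible decompositions in (5), and the invocation of the equality case of (\ref{eq:r'Or''}) in (9); everything else is routine.
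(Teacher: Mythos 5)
Your proposal is correct and follows essentially the same route as the paper: items (1)--(7) via the corresponding properties of $\|\cdot\|_{W_1}$ and the substitution/right-invariance tricks, (8) via triangle plus right invariance, (9) via tensorization, and (10) via the intermediate operation $V_1\otimes U_2$ (which is exactly the paper's midpoint, written there at the level of states). Two of your refinements are actually slightly cleaner than the paper's text --- noting the global-phase caveat in faithfulness, and restricting to product inputs so that the equality case of Lemma~\ref{le:tensorization} applies in (9), where the paper's ``take the maximum over $\rho,\rho_1,\rho_2$ on both sides'' is loose --- but these are polish, not a different method.
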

\begin{proof}
The first two properties follow from the  faithfulness and symmetry of the quantum $W_1$ norm, respectively.

Property \ref{eq:triangle} follows from the triangle inequality of the quantum $W_1$ norm,
\begin{eqnarray}
\norm{U\r U^\dg-V\r V^\dg}_{W_1}\leq
\norm{V\r V^\dg-M\r M^\dg}_{W_1}
+\norm{M\r M^\dg-U\r U^\dg}_{W_1}.
\end{eqnarray}
The equality holds when $M=e^{i\t}U$, $M=e^{i\varphi}V$ or $V\r V^\dg-M\r M^\dg=k(M\r M^\dg-U\r U^\dg)$, for $k\geq 0$.

Property \ref{eq:rightinvariant} can be proved as follows,
\begin{eqnarray}
	&&\cD(UM,VM)\\
	=&&\max_{\ket{\ps}}\norm{UM\proj{\ps} M^\dg U^\dg-VM\proj{\ps} M^\dg V^\dg}_{W_1}\\
	=&&\max_{\ket{\xi}}\norm{U\proj{\xi} U^\dg-V\proj{\xi} V^\dg}_{W_1},
\end{eqnarray}
where $\ket{\xi}=M\ket{\ps}$ is any pure state. Thus the maximization takes over all pure state. The last equality is equal to $\cD(U,V)$.

Property \ref{eq:leftinvariant} can be obtained as the quantum $W_1$ distance is invariant with respect to unitary operations acting on a single qudit.

Property \ref{eq:bounds} is obtained with the help of property \ref{eq:rightinvariant} by choosing $M=U^\dg$,
\begin{eqnarray}
	\cD(U,V)
	\leq&&\max_{\ket{\ps}}\left\{\frac{n}{2}\norm{U\ketbra{\psi}{\psi}U^\dg-V\ketbra{\psi}{\psi}V^\dg}_1\right\}\\
	=&&\max_{\ket{\ps}}\left\{n\sqrt{1-\abs{\bra{\psi}U^\dg V\ket{\psi}}^2} \right\}
	= n,
\end{eqnarray}
where the inequality comes from the fact in Lemma \ref{le:relation with trace norm}. On the other hand, $\cD(U,V)\geq 0$ is obtained directly from the nonegativity of the quantum $W_1$ norm.  So the desired result is obtained.

Property \ref{eq:E(I,Udg)} is proved as follows,
\begin{eqnarray}
\cD(I,U)=&&\max_{\ket{\psi}}\norm{\proj{\psi}-U\proj{\psi}U^\dg}_{W_1}\\
=&&\max_{\ket{\xi}=U\ket{\psi}}\norm{\proj{\xi}-U^\dg\proj{\xi}U}_{W_1}\\
=&&\cD(I,U^\dg).
\end{eqnarray}

Property \ref{eq:E(U_2U_1,V_2V_1)leq} is proved with the help of properties \ref{eq:triangle} and \ref{eq:rightinvariant}.  One can obtain that
\begin{eqnarray}
	&&\cD(U_2U_1,V_2V_1)\\
=&&
\max_{\r}\norm{U_2U_1\r U_1^\dg U_2^\dg-V_2V_1\r V_1^\dg V_2^\dg}_{W_1}\\
\leq&&
\max_{\r}\norm{U_2U_1\r U_1^\dg U_2^\dg-\r}_{W_1}+\norm{\r-V_2V_1\r V_1^\dg V_2^\dg}_{W_1}\\
=&&\cD(I,U_2U_1)+\cD(I,V_2V_1)\\
=&&\cD(U_1^\dg,U_2)+\cD(V_1^\dg,V_2).
\end{eqnarray}
The equality holds when $U_2U_1=e^{i\t}I$,  $V_2V_1=e^{i\varphi}I$, or $U_2U_1\r U_1^\dg U_2^\dg-\r=k(\r-V_2V_1\r V_1^\dg V_2^\dg)$, for $k\geq 0$.

Property \ref{eq:superadditivity} holds from the tensorization of the quantum $W_1$ norm in Lemma \ref{le:tensorization}.
 We have
 \begin{eqnarray}
 &&\norm{(U_1\otimes U_2)\r (U_1^\dg\otimes U_2^\dg)-(V_1\otimes V_2)\r (V_1^\dg\otimes V_2^\dg)}_{W_1}\\
 &&\geq\norm{U_1\r_1U_1^\dg-V_1\r_1V_1^\dg}_{W_1}+\norm{U_2\r_2U_2^\dg-V_2\r_2V_2^\dg}_{W_1},
 \end{eqnarray}
where $\r_i$ denotes the corresponding reduced state.
Take the maximum of all pure states $\r,\r_1,\r_2$ on both sides of the above inequality. Then																														the property can be proved.

Next we prove property \ref{eq:E(U1OU2,V1OV2)} by the triangle inequality of the $W_1$ norm. Let $\s=(U_1\otimes U_2)\r (U_1^\dg\otimes U_2^\dg)$. We have
\begin{eqnarray}
&&\norm{(U_1\otimes U_2)\r (U_1^\dg\otimes U_2^\dg)-(V_1\otimes V_2)\r( V_1^\dg\otimes V_2^\dg)}_{W_1}\\
=&&\norm{\s-(V_1\otimes V_2)(U_1^\dg\otimes U_2^\dg)\s (U_1\otimes U_2)(V_1^\dg\otimes V_2^\dg)}_{W_1}\\
\leq&&\norm{\s-(V_1U_1^\dg\otimes I)\s (U_1 V_1^\dg\otimes I)}_{W_1}\\
+&&\norm{(V_1U_1^\dg\otimes I)\s (U_1 V_1^\dg\otimes I)-
(I\otimes V_2U_2^\dg)(V_1U_1^\dg\otimes I)\s (U_1 V_1^\dg\otimes I)(I\otimes U_2V_2^\dg)}_{W_1}\\
=&&\norm{(U_1\otimes I)\eta(U_1^\dg\otimes I)-(V_1\otimes I)\eta (V_1^\dg\otimes I)}_{W_1}
\\
+&&\norm{(I\otimes U_2)\mu(I\otimes U_2^\dg)-(I\otimes V_2)\mu(I\otimes V_2^\dg)}_{W_1}
\\
\leq&&\cD(U_1\otimes I,V_1\otimes I)+\cD(I\otimes U_2, I\otimes V_2),
\end{eqnarray}
where $\eta=(U_1^\dg \otimes I)\s(U_1 \otimes I)$, and $\mu=(I\otimes U_2^\dg)(V_1U_1^\dg\otimes I)\s (U_1 V_1^\dg\otimes I)(I\otimes U_2)$. 
Hence it holds that $\cD(U_1\otimes U_2,V_1\otimes V_2)\leq \cD(U_1\otimes I,V_1\otimes I)+\cD(I\otimes U_2, I\otimes V_2)$.

The condition that the equality in properties \ref{eq:triangle} and \ref{eq:E(U_2U_1,V_2V_1)leq} hold comes from the following fact. For any norm induced by inner product, it can be proved that $\norm{x+y}\leq \norm{x}+\norm{y}$. The equality holds when $y=0$ or $x=ay$, $a\geq 0$. Hilbert space is the inner product space, and $\norm{\cdot}_{W_1}$ defined on the subspace of that follows the above fact.
\end{proof}

\subsection{The analytical calculation of $\cD(U,V)$}
\label{sec:calculation}
By the definition of quantum $W_1$ distance between states in (\ref{def:w1diatance}), calculating the distance analytically is a challenge \cite{de2021the}. 
The derivation of $\cD(U,V)$ in Definition \ref{def:D(U,V)} requires to take the maximization over all pure states with respect to  the quantum $W_1$ distance between states. So the analytical calculation of the $W_1$ distance between operations is more challenging than that of two states. Our calculation may provide inspiration for deriving the distance between any two unitary operations, make the quantum $W_1$ distance applicable and induce more applications.
 The results illustrate the local distinguishability of nonlocal gates, which can not be detected by fidelity or Schatten 1-norm.   We show the analytical results of the quantum $W_1$ distance between single qubit operations in Sec. \ref{sec:case n=1}, some two-qubit operations in Sec. \ref{sec:case n=2} and the multi-qubit operations in \ref{sec:case any n}.

\subsubsection{The $W_1$ distance for single-qudit operations}
\label{sec:case n=1}
We consider the quantum $W_1$ distance between arbitrary single-qubit operations, and obtain the following fact.
\begin{proposition}
The quantum $W_1$ distance between  single-qubit operations $U,V$ in $d$-dimensional Hilbert space is equal to  $\sqrt{\frac{1}{2}(1-\cos\a)}$ for $d=2$, and $\sqrt{1-\min_{\sum_j\abs{a_j}^2=1} \abs{\sum_j\abs{{a_j}}^2e^{i\a_j}}^2}$ for $d>2$. Here according to the results presented in \cite{huang2022query},
 \begin{eqnarray}
	\min_{\sum_j\abs{a_j}^2=1} \abs{\sum_j\abs{{a_j}}^2e^{i\a_j}}=
	\begin{cases}
		\cos\frac{\Theta(U^\dg V)}{2}&0\leq\Theta(U^\dg V)<\pi,
		\\
		0&\Theta(U^\dg V)\geq\pi,
	\end{cases}
\end{eqnarray}
where $\Theta(U^\dg V)$ denotes the length of the smallest arc containing all the eigenvalues of unitary operation $U^\dg V$ on the unit circle.
\end{proposition}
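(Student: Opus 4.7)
The plan is to reduce $\cD(U,V)$ to a trace-distance maximization and then to an eigenvalue problem for $U^\dg V$. First, because $n=1$, every traceless operator $X$ on a single qudit automatically satisfies $\tr_1 X=\tr X=0$, so Lemma \ref{le:relation with trace norm} forces $\norm{X}_{W_1}=\tfrac{1}{2}\norm{X}_1$. Combining this with the pure-state reduction in (\ref{def:E(U,V)}) gives
\begin{eqnarray}
\cD(U,V)=\tfrac{1}{2}\max_{\ket{\psi}}\norm{U\proj{\psi}U^\dg-V\proj{\psi}V^\dg}_1.
\end{eqnarray}

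Next I would apply the standard identity $\norm{\proj{\phi}-\proj{\psi}}_1=2\sqrt{1-|\braket{\phi}{\psi}|^2}$ to the two pure images $U\ket{\psi}$ and $V\ket{\psi}$, obtaining
\begin{eqnarray}
\cD(U,V)=\max_{\ket{\psi}}\sqrt{1-|\bra{\psi}U^\dg V\ket{\psi}|^2}.
\end{eqnarray}
Diagonalizing the unitary $U^\dg V=\sum_j e^{i\alpha_j}\proj{e_j}$ and expanding $\ket{\psi}=\sum_j a_j\ket{e_j}$, the matrix element becomes the convex combination $\sum_j |a_j|^2 e^{i\alpha_j}$, so
\begin{eqnarray}
\cD(U,V)=\sqrt{1-\min_{\sum_j|a_j|^2=1}\Big|\sum_j |a_j|^2 e^{i\alpha_j}\Big|^2},
\end{eqnarray}
which is already the formula claimed for $d>2$.

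To get the qubit formula I would carry out the minimization explicitly: writing $p=|a_1|^2$ and $\alpha=\alpha_1-\alpha_2$, the modulus squared rewrites as $1-2p(1-p)(1-\cos\alpha)$, whose extremum on $p\in[0,1]$ occurs at $p=\tfrac{1}{2}$ and equals $\tfrac{1}{2}(1+\cos\alpha)$; substituting back gives $\sqrt{\tfrac{1}{2}(1-\cos\alpha)}$. For $d>2$, the remaining task is purely geometric—minimize the modulus of a convex combination of the points $\{e^{i\alpha_j}\}$ on the unit circle—and I would dispatch it by quoting the arc-length characterization of \cite{huang2022query}: the minimum is $0$ whenever the eigenvalues span an arc of length $\Theta(U^\dg V)\geq\pi$ (so the origin already lies in their convex hull) and equals $\cos(\Theta(U^\dg V)/2)$ otherwise.

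The only subtle step is the opening reduction via Lemma \ref{le:relation with trace norm}; once the $W_1$ norm collapses to half the trace norm on single-qudit systems, the remainder is textbook pure-state geometry together with the cited arc-length fact, so no genuine obstacle arises.
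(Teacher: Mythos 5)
Your proposal is correct and follows essentially the same route as the paper: reduce $\norm{\cdot}_{W_1}$ to $\tfrac{1}{2}\norm{\cdot}_1$ via Lemma \ref{le:relation with trace norm} (since $\tr_1 X=\tr X=0$ for a single qudit), apply the pure-state trace-distance identity to get $\max_{\ket{\psi}}\sqrt{1-\abs{\bra{\psi}U^\dg V\ket{\psi}}^2}$, diagonalize $U^\dg V$, and quote \cite{huang2022query} for the arc-length characterization. Your explicit minimization for $d=2$ in terms of $p=\abs{a_1}^2$ is just a reparametrization of the paper's $\cos^2\tfrac{\theta}{2},\sin^2\tfrac{\theta}{2}$ computation, so there is no substantive difference.
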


\begin{proof}
Using Lemma \ref{le:relation with trace norm}, we have
\begin{eqnarray}
	\cD(U,V)
	=&&\frac{1}{2}\max_{\ketbra{\psi}{\psi}\in\cS_1}\norm{U\ketbra{\psi}{\psi}U^\dg- V\ketbra{\psi}{\psi}V^\dg}_{1}\\
	=&&\max_{\ketbra{\psi}{\psi}\in\cS_1}\sqrt{1-\abs{\bra{\psi}U^\dg V\ket{\psi}}^2},
\end{eqnarray}
where the first equality holds because the quantum $W_1$ norm is invariant with respect to unitary operations acting on  single qubit. The second equality comes from Lemma \ref{le:relation with trace norm}. The third  one comes from $\norm{\a uu^*-\b vv^*}_1=\sqrt{(\a+\b)^2-4\a\b\abs{\langle u,v\rangle}^2}$.

First we consider the case for $d=2$, i.e., the operations act in the two-dimensional space.  In order to obtain $\cD(U,V)$, it suffices to compute the minimum of $\abs{\bra{\psi}U^\dg V\ket{\psi}}$. 
Let $U^\dg V=R^\dg D_2R$ be the spectral decomposition of $U^\dg V$, where $D_2=\diag\{1,e^{i\a}\}$ and $R$ is an unitary matrix. The state  $\ket{\xi}=R\ket{\psi}$ is an arbitrary one-qubit state. Suppose $\ket{\xi}=\cos\frac{\t}{2}\ket{0}+e^{i\phi}\sin\frac{\t}{2}\ket{1}$, for $0\leq\t\leq\pi, 0\leq\phi<2\pi$. We have
\begin{eqnarray}
	\label{eq:calculate2dim}
	\cD(U,V)
	=&&\sqrt{1-\min_{\t,\phi}\abs{\bra{\xi}D_2\ket{\xi}}^2}\\
	=&&\sqrt{1-\min_{\t}\abs{\cos^4\frac{\t}{2}+\sin^4\frac{\t}{2}+2\cos^2\frac{\t}{2}\sin^2\frac{\t}{2}\cos\a}}\\
	=&&\sqrt{\frac{1}{2}(1-\cos\a)}.
\end{eqnarray}

Then we consider the case for $d>2$. As we have shown above, we assume that $U^\dg V=R_d^\dg D_dR_d$ is the spectral decomposition of $U^\dg V$, where $D_d=\diag\{e^{i\a_0},e^{i\a_1},...,e^{i\a_{d-1}}\}$ and $R_d$ is a $d\times d$ unitary matrix. Suppose $\ket{\ps}$ is an arbitrary qudit state and $\ket{\eta}=R_d\ket{\ps}=\sum_ja_j\ket{j}$, for  $\sum_j\abs{a_j}^2=1$.  We have
\begin{eqnarray}
	\cD(U,V)
	=&&\sqrt{1-\min_{\ket{\eta}}\abs{\bra{\eta}D\ket{\eta}}^2}\\
	=&&\sqrt{1-\min_{\sum_j\abs{a_j}^2=1} \abs{\sum_j\abs{{a_j}}^2e^{i\a_j}}^2}.
\end{eqnarray}	
It implies that once we have obtained the eigenvalues of $U^\dg V$, the calculation of  $\cD(U,V)$ is transformed into an optimization problem. The optimization is equivalent to minimize the convex sum of the eigenvalues of $U^\dg V$. 
According to the results presented in \cite{huang2022query}, one has
\begin{eqnarray}
	\min_{\sum_j\abs{a_j}^2=1} \abs{\sum_j\abs{{a_j}}^2e^{i\a_j}}=
	\begin{cases}
		\cos\frac{\Theta(U^\dg V)}{2}&0\leq\Theta(U^\dg V)<\pi,
		\\
		0&\Theta(U^\dg V)\geq\pi,
	\end{cases}
\end{eqnarray}
where $\Theta(U^\dg V)$ denotes the length of the smallest arc containing all the eigenvalues of unitary operation $U^\dg V$ on the unit circle.
\end{proof}

\subsubsection{The $W_1$ distance for two-qubit operations}
\label{sec:case n=2}
We consider the two-qubit unitary operations $U$ and $V$ in two-dimensional space.   Using Property \ref{eq:rightinvariant} of the quantum $W_1$ distance between operations, calculating the distance between unitary operations $U$ and $V$ can be equivalently transformed into the distance between operations $I$ and $VU^\dg$. Hence, it is of great importance to consider $\cD(I,M)$, where $M$ is a unitary operation. We present some analytical results about the distance between the identity $I$ and some widely-used unitary operations including  generalized controlled phase gate, CNOT, controlled-Z, SWAP gates etc. 

We consider the controlled-phase gate firstly. Let  $U_\t^{(k)}$ be the  two-qubit diagonal operation whose $k$-th diagonal entry is $e^{i\t }$ and other diagonal entries are 1, for $k=1,2,3,4$. We have the following fact.
\begin{proposition}
	\label{pro:D(I,X1)}
	The quantum $W_1$ distance between $I$ and the gate $U_\t^{(3)}=
	\diag\{1,1,e^{i\t},1\}$ is equal to $\sqrt{2}\sin\frac{\t}{2}$, i.e.
	\begin{eqnarray}
		\label{def:D(I,X1)=}
		\cD(I, U_\t^{(3)})
		=&&\max_{\r\in\cS_2}\norm{\r-U_\t^{(3)}\r U_\t^{(3)\dg}}_{W_1}\\
		=&&\sqrt{2}\sin\frac{\t}{2}.
	\end{eqnarray}
\end{proposition}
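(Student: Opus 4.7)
The plan is to establish matching lower and upper bounds, both equal to $\sqrt{2}\sin(\theta/2)$, by analysing $X(\psi):=\ketbra{\psi}{\psi}-U_\theta^{(3)}\ketbra{\psi}{\psi}U_\theta^{(3)\dg}$ for a general pure input $\ket{\psi}=a\ket{00}+b\ket{01}+c\ket{10}+d\ket{11}$. Since $U_\theta^{(3)}=I+(e^{i\theta}-1)\proj{10}$ only twists the $\ket{10}$ component, a direct expansion yields
\begin{equation}
X(\psi)=\alpha\ketbra{10}{\phi}+\bar\alpha\ketbra{\phi}{10},\qquad \alpha=c(1-e^{i\theta}),\qquad \ket{\phi}=a\ket{00}+b\ket{01}+d\ket{11},
\end{equation}
with $|\alpha|=2|c|\sin(\theta/2)$ and $\langle 10|\phi\rangle=0$.

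For the lower bound I would compute the two partial traces explicitly, obtaining $\tr_2 X(\psi)=\alpha\bar a\ketbra{1}{0}+\bar\alpha a\ketbra{0}{1}$ and $\tr_1 X(\psi)=\alpha\bar d\ketbra{0}{1}+\bar\alpha d\ketbra{1}{0}$, whose trace norms equal $4|ac|\sin(\theta/2)$ and $4|cd|\sin(\theta/2)$ respectively. Corollary~\ref{cor:geqnorm1} then gives $\|X(\psi)\|_{W_1}\ge 2|c|(|a|+|d|)\sin(\theta/2)$. Maximising this bound subject to the normalisation $|a|^2+|b|^2+|c|^2+|d|^2=1$, I would first set $|b|=0$ and then apply Cauchy--Schwarz to pair $|a|+|d|$ against $|c|$; the resulting supremum is $\sqrt{2}\sin(\theta/2)$, attained at $\ket{\psi_\ast}=\tfrac{1}{2}\ket{00}+\tfrac{1}{\sqrt{2}}\ket{10}+\tfrac{1}{2}\ket{11}$.

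For the matching upper bound, the key step is to exhibit, for every $\psi$, a splitting $X(\psi)=X^{(1)}+X^{(2)}$ obeying $\tr_1 X^{(1)}=0$ and $\tr_2 X^{(2)}=0$ whose trace-norm sum is at most $2\sqrt{2}\sin(\theta/2)$. Writing $\ket{\phi}=\ket{0}\otimes(a\ket{0}+b\ket{1})+\ket{1}\otimes d\ket{1}$, I would take
\begin{equation}
X^{(2)}=\ketbra{1}{1}\otimes\bigl[\alpha\bar d\ketbra{0}{1}+\bar\alpha d\ketbra{1}{0}\bigr],\qquad X^{(1)}=X(\psi)-X^{(2)}.
\end{equation}
Then $\tr_2 X^{(2)}=0$ because the qubit-2 factor is off-diagonal, while $\tr_1 X^{(1)}=0$ because the qubit-1 factors surviving in $X^{(1)}$ are the off-diagonal operators $\ketbra{1}{0}$ and $\ketbra{0}{1}$. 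Both $X^{(i)}$ are rank-two Hermitians, and a direct evaluation yields $\|X^{(1)}\|_1=2|\alpha|\sqrt{|a|^2+|b|^2}$ and $\|X^{(2)}\|_1=2|\alpha||d|$, so $\|X(\psi)\|_{W_1}\le 2|c|\sin(\theta/2)(\sqrt{|a|^2+|b|^2}+|d|)$. A second Cauchy--Schwarz estimate bounds the bracketed factor times $2|c|$ by $\sqrt{2}$ under the normalisation, with equality again at $\ket{\psi_\ast}$; combining the two bounds closes the equality.

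The main obstacle is engineering the splitting of $X(\psi)$ so that both partial-trace constraints hold simultaneously without paying a factor larger than $\sqrt{2}$. The choice above works because the qubit-1 $=\ket{1}$ sector of $\ket{\phi}$ carries only the qubit-2 state $\ket{1}$, which is orthogonal to the $\ket{0}$ appearing in $\ket{10}$; this forces $X^{(2)}$ to be tensor-factorised with a trace-zero qubit-2 piece, and simultaneously confines $X^{(1)}$ to off-diagonal qubit-1 operators. A naive split, such as halving each off-diagonal entry of $X$, would reproduce only the trivial bound $\|X\|_{W_1}\le\|X\|_1\le 2\sin(\theta/2)$ and miss the tight factor of $\sqrt{2}$.
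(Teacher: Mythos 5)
Your proof is correct, and while the lower bound is essentially identical to the paper's, your upper bound takes a genuinely different and substantially simpler route. The paper also starts from Corollary \ref{cor:geqnorm1}, computes the two single-qubit marginals to get the same state-dependent bound $2\sin\frac{\t}{2}\,\abs{a_{1,0}}(\abs{a_{0,0}}+\abs{a_{1,1}})$, and maximizes it to $\sqrt{2}\sin\frac{\t}{2}$. For the upper bound, however, the paper does not exhibit a closed-form splitting: it restricts to candidates $F^{(i)}=\r^{(i)}-M\r^{(i)}M$ with $M=\diag\{1,1,-1,1\}$, parametrizes the entries of $F^{(1)},F^{(2)}$ by trigonometric angles, fixes $c_1+c_2=\sqrt{2}\sin\frac{\t}{2}$ a priori, and proves solvability of the resulting system of equations for every input state, with separate treatment of the margin cases $a_{1,0}\in\{0,1\}$ and $a_{0,0}=a_{1,1}=0$. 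Your decomposition $X^{(2)}=\ketbra{1}{1}\otimes\bigl[\a\bar d\ketbra{0}{1}+\bar\a d\ketbra{1}{0}\bigr]$, $X^{(1)}=X(\psi)-X^{(2)}$ verifies the two partial-trace constraints by inspection, yields the explicit state-dependent bound $2\abs{c}\sin\frac{\t}{2}\bigl(\sqrt{\abs{a}^2+\abs{b}^2}+\abs{d}\bigr)$, and closes the gap with a single Cauchy--Schwarz step; no case analysis or existence argument is needed. What the paper's approach buys is a decomposition with both terms of the structured form $\r^{(i)}-M\r^{(i)}M$ achieving $c_1+c_2=\sqrt{2}\sin\frac{\t}{2}$ for \emph{every} state (not just the maximizer), which is slightly stronger pointwise information but irrelevant for the value of $\cD$; what your approach buys is a proof roughly a page shorter that makes the optimal splitting transparent. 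One small point worth stating explicitly if you write this up: each $X^{(i)}$ is self-adjoint and traceless, so it indeed lies in $\cM_2$ as the definition of $\norm{\cdot}_{W_1}$ requires.
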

The proof of Proposition \ref{pro:D(I,X1)} is shown in Appendix \ref{sec:D(I,X)}.

 By applying appropriate local unitary operation, the $U_\t^{(k)}$'s can transform to each other,
\begin{eqnarray}
	(\s_x\otimes I)U_\t^{(3)}	(\s_x\otimes I)=U_\t^{(1)}, \; 
	(\s_x\otimes \s_x)U_\t^{(3)}	(\s_x\otimes \s_x)=U_\t^{(2)},
	\;
	(I\otimes \s_x)U_\t^{(3)}	(I\otimes \s_x)=U_\t^{(4)}.
\end{eqnarray}
Since $\norm{\cdot}_{W_1}$ is invariant under local unitary operation, we obtain the following fact. 
\begin{corollary}
	\label{cor:D(I,UK)}
	 The quantum $W_1$ distance between $I$ and controlled-phase gate $U_\t^{(k)}$ is equal to $\sqrt{2}\sin\frac{\t}{2}$, i.e.
	\begin{eqnarray}
		\label{def:D(I,UK)}
		\cD(I, U_\t^{(k)})=\sqrt{2}\sin\frac{\t}{2},
	\end{eqnarray}
\end{corollary}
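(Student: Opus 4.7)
The plan is to reduce the statement to the already-established case $k=3$ (Proposition \ref{pro:D(I,X1)}) by using the invariance of $\norm{\cdot}_{W_1}$ under local unitary operations, which is exactly what the paper highlighted just before stating the corollary.

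First, I would unpack the three identities already displayed in the paper, namely
\begin{equation}
U_\t^{(1)} = (\s_x\otimes I)\, U_\t^{(3)}\,(\s_x\otimes I), \quad U_\t^{(2)} = (\s_x\otimes \s_x)\, U_\t^{(3)}\,(\s_x\otimes \s_x), \quad U_\t^{(4)} = (I\otimes \s_x)\, U_\t^{(3)}\,(I\otimes \s_x).
\end{equation}
In each case $U_\t^{(k)} = L_k\, U_\t^{(3)}\, L_k$ with $L_k \in \cU_1^{\otimes 2}$ a self-adjoint tensor product of Pauli operators. Then for any $\r\in\cS_2$, set $\sigma = L_k \r L_k$; as $\r$ ranges over all states, so does $\sigma$, because $L_k$ is a unitary.

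Next, I would perform the change of variables inside the maximum defining $\cD(I,U_\t^{(k)})$:
\begin{equation}
\r - U_\t^{(k)} \r U_\t^{(k)\dg} = L_k \bigl(\sigma - U_\t^{(3)} \sigma U_\t^{(3)\dg}\bigr) L_k.
\end{equation}
Now the key point: because $L_k$ is a tensor product of single-qubit unitaries, conjugation by $L_k$ preserves the quantum $W_1$ norm; this is precisely Property \ref{eq:leftinvariant} of Proposition \ref{pro:property} (equivalently, the well-known local-unitary invariance of $\norm{\cdot}_{W_1}$ recalled in Sec. \ref{sec:W1 distance between states}). Hence
\begin{equation}
\norm{\r - U_\t^{(k)} \r U_\t^{(k)\dg}}_{W_1} = \norm{\sigma - U_\t^{(3)} \sigma U_\t^{(3)\dg}}_{W_1}.
\end{equation}
Taking the maximum over $\r$ (equivalently $\sigma$) on both sides yields $\cD(I,U_\t^{(k)}) = \cD(I,U_\t^{(3)})$, which by Proposition \ref{pro:D(I,X1)} equals $\sqrt{2}\sin\frac{\t}{2}$.

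There is essentially no obstacle here; the whole content is bookkeeping around the change of variables. The only thing to be careful about is to invoke the correct invariance statement: one wants invariance of $\norm{\cdot}_{W_1}$ under conjugation by a tensor product of single-qudit unitaries (which is available), not invariance under arbitrary global unitaries (which fails in general and is precisely the feature that distinguishes the $W_1$ distance from the trace distance). Once this is pointed out, the corollary follows in one line from Proposition \ref{pro:D(I,X1)}.
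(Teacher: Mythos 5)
Your proposal is correct and follows exactly the paper's own route: the paper derives the corollary from the displayed local-unitary conjugation identities $U_\t^{(k)}=L_k U_\t^{(3)} L_k$ together with the invariance of $\norm{\cdot}_{W_1}$ under single-qubit unitaries, reducing everything to Proposition \ref{pro:D(I,X1)}. Your write-up merely makes the change of variables $\sigma=L_k\r L_k$ explicit, which the paper leaves implicit.
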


The CNOT and controlled-Z gate are most widely-used controlled gate in computation. First we obtain $\cD(I,\rm CZ)$ using Corollary \ref{cor:D(I,UK)}. Then $\cD(I,\rm CNOT)$ is derived by analyzing the relation between $\cD(I,\rm CNOT)$ and $\cD(I,\rm CZ)$. 

Obviously, one has $U_{CZ}=U_\pi^{(4)}$ in  Corollary \ref{cor:D(I,UK)}. By setting $\t=\pi$, the distance $\cD(I,\rm CZ)$ can be obtained.
\begin{proposition}
	\label{pro:D(I,CZ)}
	The quantum $W_1$ distance between $I$ and controlled-Z gate is equal to $\sqrt{2}$, i.e.
	\begin{eqnarray}
		\cD(I,\mbox{CZ})=\sqrt{2}.
	\end{eqnarray}
\end{proposition}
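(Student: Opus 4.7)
The plan is to observe that the controlled-Z gate is precisely a specialization of the generalized controlled-phase gate already analyzed in Corollary \ref{cor:D(I,UK)}, and then simply plug in the appropriate parameter. Concretely, since
\begin{eqnarray}
U_{CZ}=\diag\{1,1,1,-1\}=\diag\{1,1,1,e^{i\pi}\}=U_\pi^{(4)},
\end{eqnarray}
the controlled-Z gate is the $k=4$ instance of the family $U_\theta^{(k)}$ at the parameter value $\theta=\pi$.

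Next, I would invoke Corollary \ref{cor:D(I,UK)} directly with $k=4$ and $\theta=\pi$, which yields
\begin{eqnarray}
\cD(I,U_{CZ})=\cD(I,U_\pi^{(4)})=\sqrt{2}\sin\tfrac{\pi}{2}=\sqrt{2}.
\end{eqnarray}
This is the claimed value, so the proof is complete.

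There is essentially no new obstacle here: all of the genuine work has already been done. In particular, the analytical optimization over pure two-qubit states needed to evaluate $\cD(I,U_\theta^{(3)})$ is carried out in Proposition \ref{pro:D(I,X1)} (with details deferred to Appendix \ref{sec:D(I,X)}), and the extension from $U_\theta^{(3)}$ to the other $U_\theta^{(k)}$ uses only the local unitary intertwiners $\sigma_x\otimes I$, $\sigma_x\otimes\sigma_x$, $I\otimes\sigma_x$ together with the invariance of $\norm{\cdot}_{W_1}$ under single-qubit unitaries (property \ref{eq:leftinvariant} of Proposition \ref{pro:property}). If anything deserves emphasis in the written-up proof, it is simply the identification $U_{CZ}=U_\pi^{(4)}$, since without this the reader might expect a fresh calculation. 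The proof should therefore be short: one line to identify $U_{CZ}$ with $U_\pi^{(4)}$, one line to cite Corollary \ref{cor:D(I,UK)}, and one line to evaluate $\sqrt{2}\sin(\pi/2)=\sqrt{2}$.
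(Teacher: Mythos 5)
Your proposal is correct and follows exactly the paper's own route: the paper likewise observes that $U_{CZ}=U_\pi^{(4)}$ and obtains $\cD(I,\mbox{CZ})=\sqrt{2}\sin\frac{\pi}{2}=\sqrt{2}$ by setting $\t=\pi$ in Corollary \ref{cor:D(I,UK)}. Nothing further is needed.
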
 

The CNOT and controlled-Z gate are locally unitary equivalent. So the relation between $\cD(I,\rm CNOT)$ and $\cD(I,\rm CZ)$ can be  derived by the single-qubit unitary invariance of $\norm{\cdot}_{W_1}$.
\begin{lemma}
	\label{le:CZ=CNOT}
	The distance between $I$ and CNOT gate is equal to that between $I$ and Controlled-Z gate. That is to say, 
	\begin{eqnarray}
		\cD(I,\mbox{CNOT})=\cD(I,\mbox{CZ}).
	\end{eqnarray}
\end{lemma}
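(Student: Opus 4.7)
My plan is to reduce the claim to the single-qubit unitary invariance of $\|\cdot\|_{W_1}$ already used in Property~\ref{eq:leftinvariant} of Proposition~\ref{pro:property}, by exhibiting CNOT as a local-unitary conjugate of CZ. Specifically, I will use the textbook identity
\begin{equation*}
U_{CN} \;=\; (I\otimes H)\,U_{CZ}\,(I\otimes H),
\end{equation*}
where $H$ is the Hadamard gate, so that $(I\otimes H)^2=I$. This identity is a routine $4\times 4$ matrix check that I will not grind through.

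Given the identity, I would unfold the definition of $\cD(I,U_{CN})$. For any two-qubit state $\rho\in\cS_2$ set $\sigma:=(I\otimes H)\rho(I\otimes H)$; the map $\rho\mapsto\sigma$ is a bijection of $\cS_2$ onto itself. A direct substitution yields
\begin{equation*}
\rho - U_{CN}\rho U_{CN}^{\dagger} \;=\; (I\otimes H)\bigl(\sigma - U_{CZ}\,\sigma\,U_{CZ}^{\dagger}\bigr)(I\otimes H).
\end{equation*}
Since $I\otimes H\in\cU_1^{\otimes 2}$, conjugation by it preserves $\|\cdot\|_{W_1}$ (this is precisely the single-qubit invariance behind Property~\ref{eq:leftinvariant}). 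Taking the maximum over $\rho$ on the left becomes the maximum over $\sigma$ on the right, which is $\cD(I,U_{CZ})$.

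An equivalent symbolic derivation uses only Proposition~\ref{pro:property}: using $(I\otimes H)^2=I$ and Property~\ref{eq:leftinvariant} with $N=I\otimes H$ gives $\cD(I,U_{CN})=\cD(I\otimes H,\,U_{CZ}(I\otimes H))$, and then the right unitary invariance (Property~\ref{eq:rightinvariant}) with $M=I\otimes H$ peels off the trailing factor to leave $\cD(I,U_{CZ})$. There is essentially no obstacle here; the only subtle point is verifying the Hadamard-conjugation identity relating CNOT and CZ, which is standard.
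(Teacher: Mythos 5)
Your proposal is correct and follows essentially the same route as the paper: both rely on the Hadamard-conjugation identity $U_{CZ}=(I\otimes H)U_{CN}(I\otimes H)$ together with the invariance of $\norm{\cdot}_{W_1}$ under single-qubit unitaries, and both absorb the conjugation into a relabelling of the maximization variable. The alternative symbolic derivation you sketch via Properties~\ref{eq:rightinvariant} and~\ref{eq:leftinvariant} is a harmless repackaging of the same argument.
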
 
\begin{proof}
	It can be proved using the fact that the Controlled-Z gate $U_{CZ}$ can be prepared with the help of a CNOT gate $U_{CN}$ and two Hadamard gates $H$, i.e.,
	\begin{eqnarray}
		U_{CZ}=(I\otimes H)U_{CN}(I\otimes H).
	\end{eqnarray}
	One can show that
	\begin{eqnarray}
		&&\cD(I,\mbox{CNOT})\\
		=&&\max_{\ketbra{\psi}{\psi}\in\cS_2}
		\norm{(I\otimes H)[\ketbra{\psi}{\psi}-U_{CN}\ketbra{\psi}{\psi}U_{CN}](I\otimes H)}_{W_1}\\
		=&&\max_{\ketbra{\eta}{\eta}\in\cS_2}\norm{\ketbra{\eta}{\eta}-U_{CZ}\ketbra{\eta}{\eta}U_{CZ}}_{W_1}\\
		=&&\cD(I,\mbox{CZ})
	\end{eqnarray}
	where $\ket{\eta}=(I\otimes H)\ket{\ps}$ is any two-partite pure state.
\end{proof}

Using Proposition \ref{pro:D(I,CZ)} and Lemma \ref{le:CZ=CNOT}, the $W_1$ distance between $I$ and CNOT gate is obtained as follows.
\begin{proposition}
	\label{pro:D(I,CNOT)}
	The quantum $W_1$ distance between $I$ and CNOT gate is equal to $\sqrt{2}$, i.e.
	\begin{eqnarray}
		\cD(I,\mbox{CNOT})=\sqrt{2}.
	\end{eqnarray}
\end{proposition}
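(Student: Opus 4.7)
The plan is to deduce Proposition \ref{pro:D(I,CNOT)} as an immediate consequence of the two results just established. First I would invoke Lemma \ref{le:CZ=CNOT}, which uses the factorization $U_{CZ} = (I \otimes H) U_{CN} (I \otimes H)$ together with the invariance of $\norm{\cdot}_{W_1}$ under conjugation by single-qubit unitaries to give $\cD(I, \mbox{CNOT}) = \cD(I, \mbox{CZ})$. Then I would apply Proposition \ref{pro:D(I,CZ)} to replace the right-hand side by $\sqrt{2}$, and chain the two equalities.

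The substantive work has already been done upstream: Proposition \ref{pro:D(I,X1)} handles the maximization over pure two-qubit states for the diagonal gate $U_\theta^{(3)}$, Corollary \ref{cor:D(I,UK)} extends this to all four controlled-phase gates $U_\theta^{(k)}$ via local Pauli conjugations, and Proposition \ref{pro:D(I,CZ)} specializes $\theta = \pi$ to obtain $\sqrt{2} \sin(\pi/2) = \sqrt{2}$. Consequently, there is no genuine obstacle remaining at this step; it is purely a bookkeeping argument invoking single-qubit unitary invariance (Property \ref{eq:leftinvariant} of Proposition \ref{pro:property}) once more.

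An equivalent route that would make the dependency explicit is to bypass Lemma \ref{le:CZ=CNOT} and argue directly from Corollary \ref{cor:D(I,UK)}: since $U_{CN} = (I \otimes H)\, U_\pi^{(4)}\, (I \otimes H)$, the distance $\cD(I, \mbox{CNOT})$ equals $\cD(I, U_\pi^{(4)})$ by the single-qubit-unitary invariance of $\norm{\cdot}_{W_1}$, and the latter is $\sqrt{2}\sin(\pi/2) = \sqrt{2}$ by Corollary \ref{cor:D(I,UK)}. Either path yields the same one-line derivation, and the interesting geometry lives in the earlier reduction to the canonical diagonal gate $U_\theta^{(3)}$ rather than in this proposition itself.
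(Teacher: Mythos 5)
Your proposal is correct and matches the paper's own proof exactly: the paper derives Proposition \ref{pro:D(I,CNOT)} by chaining Lemma \ref{le:CZ=CNOT} (which uses $U_{CZ}=(I\otimes H)U_{CN}(I\otimes H)$ and single-qubit unitary invariance of $\norm{\cdot}_{W_1}$) with Proposition \ref{pro:D(I,CZ)}. Your alternative route via Corollary \ref{cor:D(I,UK)} is a valid cosmetic variant of the same argument, so no further comment is needed.
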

By now, the $W_1$ distance from the identity $I$ and arbitrary two-qubit controlled gates has been  obtained.

The SWAP gate is also a widely used gate in quantum computation. It  accomplishes a useful task, i.e., swapping the states of the two qubits. In quantum circuits, it can be composed by three CNOT gates. We consider $\cD(I,\rm SWAP)$ and the following result is obtained. 
 \begin{proposition}
 	\label{pro:D(I,SWAP)}
The quantum $W_1$ distance between $I$ and SWAP gate is equal to 2, i.e.
\begin{eqnarray}
	\cD(I,\rm SWAP)=2.
\end{eqnarray}
 \end{proposition}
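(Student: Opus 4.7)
The plan is to prove $\cD(I,\mathrm{SWAP}) = 2$ by establishing matching upper and lower bounds. The upper bound is immediate: since $U_{SW}$ acts on $n=2$ qubits, property~\ref{eq:bounds} of Proposition~\ref{pro:property} gives $\cD(I,\mathrm{SWAP}) \leq 2$.

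For the matching lower bound, I would choose a simple computational basis input. Take $\ket{\psi} = \ket{01}$, so that $U_{SW}\ket{01} = \ket{10}$. Then the candidate input state contributes
\begin{equation}
\norm{\proj{01} - U_{SW}\proj{01}U_{SW}^\dg}_{W_1} = \norm{\proj{01} - \proj{10}}_{W_1}.
\end{equation}
By the recovery of the classical $W_1$ distance on diagonal states (Lemma~\ref{le: recovery of classical W1}, specifically equation~\eqref{eq:projx-projy}), the right-hand side equals the Hamming distance $h(01,10) = 2$. Hence $\cD(I,\mathrm{SWAP}) \geq 2$, and combining with the upper bound yields $\cD(I,\mathrm{SWAP}) = 2$.

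There is no substantive obstacle: the key conceptual point is that SWAP sends a basis vector differing in both coordinates to its complement, so the Hamming-distance recovery formula immediately produces the maximum possible value of the distance. One could note in passing that this also provides a sanity check against property~\ref{eq:superadditivity}: since $U_{SW}$ is not a tensor product, we cannot invoke superadditivity directly, but the result $\cD(I,\mathrm{SWAP}) = 2 = \cD(I,\sigma_x) + \cD(I,\sigma_x)$ matches what one would obtain for $X \otimes X$, reflecting the fact that SWAP acts nontrivially on every qubit in this witnessing input.
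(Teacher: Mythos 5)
Your proposal is correct and matches the paper's own proof essentially verbatim: both use Property~\ref{eq:bounds} for the upper bound $\cD(I,\mathrm{SWAP})\leq 2$ and the witness state $\ket{0,1}$ together with the Hamming-distance recovery formula \eqref{eq:projx-projy} to get $\norm{\proj{0,1}-\proj{1,0}}_{W_1}=2$ as the matching lower bound. No gaps.
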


\begin{proof}
	From Property \ref{eq:bounds}, it holds that  $\cD(I,\mbox{SWAP})\leq 2$. On the other hand, we choose the state $\ket{\xi}=\ket{0,1}$. By the definition of $\cD(U,V)$ in (\ref{def:E(U,V)}), we have
	\begin{eqnarray}
		&&\cD(I,\mbox{SWAP})\\
		=&&\max_{\ketbra{\psi}{\psi}\in\cS_2}\norm{\ketbra{\psi}{\psi}-U_{SW}\ketbra{\psi}{\psi}U_{SW}}_{W_1}\\
		\geq&&\norm{\proj{\xi}-U_{SW}\proj{\xi}U_{SW}}_{W_1}\\
		=&&\norm{\proj{0,1}-\proj{1,0}}_{W_1}=2,
	\end{eqnarray}
	where $\norm{\proj{0,1}-\proj{1,0}}_{W_1}=2$ is obtained from the equality in (\ref{eq:projx-projy}). So we have 
	\begin{eqnarray}
		\cD(I,\mbox{SWAP})=2.
	\end{eqnarray}
\end{proof}

Following the idea of the proof of Proposition \ref{pro:D(I,SWAP)}, we obtain a  more general result. 
\begin{proposition}
	\label{pro:permutation matrix}
	Any two-qubit unitary gates switching $\proj{0,1}$ to $\proj{1,0}$, or equivalently $\proj{1,0}$ to $\proj{0,1}$,  have the same quantum Wasserstein distance with the identity $I$,  i.e.,
	\begin{eqnarray}
		\label{eq:D(I,Uk)}
		\cD(I,U_k)=2,
	\end{eqnarray}
	where
	\begin{eqnarray}
		\label{eq:U_k}
		U_1=\bma
		0&*&*&*\\
		0&*&*&*\\
		0&*&*&*\\
		1&0&0&0
		\ema,
		U_2=\bma
		*&0&*&*\\
		*&0&*&*\\
		0&1&0&0\\
		*&0&*&*
		\ema,
		U_3=\bma
		*&*&0&*\\
		0&0&1&0\\
		*&*&0&*\\
		*&*&0&*
		\ema,
		U_4=\bma
		0&0&0&1\\
		*&*&*&0\\
		*&*&*&0\\
		*&*&*&0
		\ema.
	\end{eqnarray}
For any unitary operations $U,V$, the unitary operations $M$ satisfying 
\begin{eqnarray}
	M=(U\otimes V)U_k(U\otimes V)^\dg,
\end{eqnarray}  
show the same distance  with identity $I$, \begin{eqnarray}
	\label{eq:D(I,M)}
	\cD(I,M)=2.
\end{eqnarray}
\end{proposition}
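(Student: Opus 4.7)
The strategy for the first claim is to reduce each $\cD(I,U_k)$ to a computation on a single, well-chosen computational-basis input, then match the universal upper bound $\cD(I,U_k)\leq 2$ coming from Property~\ref{eq:bounds} with $n=2$. The key observation is that the sparsity pattern imposed in~(\ref{eq:U_k}) forces one column of each $U_k$ to be a computational basis vector whose index differs from the column index in both bits. Reading off from~(\ref{eq:U_k}): column~$1$ of $U_1$ is $(0,0,0,1)^T$, so $U_1\ket{00}=\ket{11}$; column~$2$ of $U_2$ is $(0,0,1,0)^T$, so $U_2\ket{01}=\ket{10}$; column~$3$ of $U_3$ is $(0,1,0,0)^T$, so $U_3\ket{10}=\ket{01}$; and column~$4$ of $U_4$ is $(1,0,0,0)^T$, so $U_4\ket{11}=\ket{00}$. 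In all four cases the input bitstring $x_k$ and output bitstring $y_k$ satisfy $h(x_k,y_k)=2$.

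Taking $\ket{\psi}=\ket{x_k}$ in the definition~(\ref{def:E(U,V)}) and invoking~(\ref{eq:projx-projy}) from Lemma~\ref{le: recovery of classical W1} then yields
\begin{equation}
\cD(I,U_k)\geq \norm{\proj{x_k}-\proj{y_k}}_{W_1}=h(x_k,y_k)=2.
\end{equation}
Combining this with $\cD(I,U_k)\leq 2$ from Property~\ref{eq:bounds} gives equality and establishes~(\ref{eq:D(I,Uk)}).

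For the conjugation statement~(\ref{eq:D(I,M)}), the plan is a two-step invariance argument using Proposition~\ref{pro:property}. First, right unitary invariance (Property~\ref{eq:rightinvariant}) applied to the pair $(I,M)$ with common right factor $U\otimes V$ gives $\cD(I,M)=\cD(U\otimes V,\,M(U\otimes V))=\cD(U\otimes V,\,(U\otimes V)U_k)$, using $M(U\otimes V)=(U\otimes V)U_k$. Second, since $U^\dg\otimes V^\dg\in\cU_1^{\otimes 2}$, single-qubit invariance (Property~\ref{eq:leftinvariant}) lets me left-multiply both arguments by $U^\dg\otimes V^\dg$, collapsing the expression to $\cD(I,U_k)=2$ by the first part. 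There is no substantive obstacle here; the only care required is to correctly read off the forced column of each $U_k$ and to note that the induced input-output basis pair realises the maximal two-qubit Hamming distance.
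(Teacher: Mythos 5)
Your proof is correct and follows essentially the same route as the paper: the lower bound $\cD(I,U_k)\geq 2$ comes from feeding in the computational-basis state whose image under $U_k$ is at Hamming distance $2$ and invoking Lemma~\ref{le: recovery of classical W1}, matched against the upper bound $\cD(I,U_k)\leq 2$ from Property~\ref{eq:bounds}, exactly as in the paper's proof of Proposition~\ref{pro:D(I,SWAP)} to which it defers. Your handling of the conjugated gates $M$ via the chain of Properties~\ref{eq:rightinvariant} and~\ref{eq:leftinvariant} is a slightly more systematic packaging of the paper's local-unitary-invariance argument, but it is the same underlying idea.
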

\begin{proof}
Eq. (\ref{eq:D(I,Uk)}) can be obtained by the same way as the proof of Proposition \ref{pro:D(I,SWAP)}.
Recall the property that the quantum Wasserstein distance is invariant with respect to the unitary operations on single qubit.  We have 
\begin{eqnarray}
	\norm{(U\otimes V)\proj{0,1}(U\otimes V)^\dg-(U\otimes V)\proj{1,0}(U\otimes V)^\dg}_{W_1}=2,
\end{eqnarray}
where $U,V$ are single qubit unitary operations.
From Property \ref{eq:leftinvariant} in Proposition \ref{pro:property},  Eq. (\ref{eq:D(I,M)}) is obtained.
\end{proof}
The $W_1$ distance between identity and all order-4 permutation matrices can be derived by Proposition \ref{pro:permutation matrix}. We consider the representation of order-4 permutation group. They are 
\begin{eqnarray}
P_1=I, 
P_2=\bma 
1&0&0&0\\
0&1&0&0\\
0&0&0&1\\
0&0&1&0\\
\ema,
P_3=\bma 
1&0&0&0\\
0&0&1&0\\
0&1&0&0\\
0&0&0&1\\
\ema,
...,
P_{24}=\bma 
0&0&0&1\\
0&0&1&0\\
0&1&0&0\\
1&0&0&0\\
\ema.
\end{eqnarray}
One can verify that 15 permutation matrices in $\{P_1,...,P_{24}\}$ are included in (\ref{eq:U_k}).  By Proposition \ref{pro:permutation matrix}, one can obtain that the $W_1$ distance between every one of them and identity is equal to two. Other nine permutation matrices are listed as follows,
\begin{eqnarray}
H_1=I,H_2=U_{CN}, 
H_3=\bma 
0&1&0&0\\
1&0&0&0\\
0&0&1&0\\
0&0&0&1\\
\ema,
H_4=\bma 
0&1&0&0\\
1&0&0&0\\
0&0&0&1\\
0&0&1&0\\
\ema,
H_5=\bma 
0&0&1&0\\
0&0&0&1\\
1&0&0&0\\
0&1&0&0\\
\ema,
\\
H_6=\bma 
1&0&0&0\\
0&0&0&1\\
0&0&1&0\\
0&1&0&0\\
\ema,
H_7=\bma 
0&0&1&0\\
0&1&0&0\\
1&0&0&0\\
0&0&0&1\\
\ema,
H_8=\bma 
0&1&0&0\\
0&0&0&1\\
1&0&0&0\\
0&0&1&0\\
\ema,
H_9=\bma 
0&0&1&0\\
1&0&0&0\\
0&0&0&1\\
0&1&0&0\\
\ema.
\end{eqnarray}
We analyze the distance for $\cD(I,H_k)$. We have $(\s_x\otimes I) H_3 (\s_x\otimes I)=H_2$, so $\cD(I,H_2)=\cD(I,H_3)=\sqrt{2}$. The fact that $\cD(I,H_4)=\cD(I,H_5)=1$ has been obtained in Proposition \ref{pro: D(I,sx)=k}. Using the fact that $(I\otimes \s_x)H_6(I\otimes \s_x)=H_7$ and $U_{SW}H_6U_{SW}=H_2$, one has $\cD(I,H_2)=\cD(I,H_6)=\cD(I,H_7)=\sqrt{2}$. By  $(I\otimes \s_x)H_8(I\otimes \s_x)=H_9$, one has $\cD(I,H_8)=\cD(I,H_9)$. Using Lemma \ref{le:tensorization}, we find that the lower bound of them is two. Combined with Property \ref{eq:bounds},  we have $\cD(I,H_8)=\cD(I,H_9)=2$. By now, we have obtained the $W_1$ distance between identity and all the order-4 permutation matrices.

\subsubsection{The $W_1$ distance for multi-qudit operations}
\label{sec:case any n}
We show a fact considering the $W_1$ distance between $I$ and a multi-qudit operation. It shows the local discrimination of quantum operations, which is a unique property of the quantum $W_1$ norm between operations.
\begin{proposition}
	\label{pro: D(I,sx)=k}
For a $n$-qudit operation consisted of tensor product of $k$ Pauli gate $X$ and $n-k$ identity $I$, the quantum $W_1$ distance between it and identity $I$ is equal to $k$, i.e.
\begin{eqnarray}
\cD(I^{\otimes n}, I^{\otimes (n-k)}\otimes X^{\otimes k})=k,
\end{eqnarray}
for $k=1,2,...,n$, up to permutations of the qudits.
\end{proposition}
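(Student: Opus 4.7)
The plan is to prove the two inequalities $\cD(I^{\otimes n}, I^{\otimes(n-k)}\otimes X^{\otimes k}) \ge k$ and $\le k$ separately. Throughout, write $M = I^{\otimes(n-k)}\otimes X^{\otimes k}$; by the permutation symmetry of $\norm{\cdot}_{W_1}$ it is enough to treat the case in which the $X$ factors act on the last $k$ qudits.

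For the lower bound, I would plug in the specific input $\ket{\psi} = \ket{0}^{\otimes n}$. Since $X\ket{0}=\ket{1}$, the two output states are computational-basis projectors $\proj{x}$ and $\proj{y}$ with $x=(0,\ldots,0)$ and $y=(0,\ldots,0,1,\ldots,1)$ (with $k$ ones at the end). Lemma~\ref{le: recovery of classical W1}, equation~(\ref{eq:projx-projy}), then gives $\norm{\proj{x}-\proj{y}}_{W_1}=h(x,y)=k$, so $\cD(I^{\otimes n},M)\ge k$ by Definition~\ref{def:D(U,V)}.

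For the upper bound, my idea is a telescoping decomposition through single-qudit flips. Define $M_0=I^{\otimes n}$ and, for $j=1,\ldots,k$, let $M_j=I^{\otimes(n-k)}\otimes X^{\otimes j}\otimes I^{\otimes(k-j)}$, so that $M_k=M$ and each adjacent pair differs by the single-qudit gate $N_j$ which applies $X$ only to qudit $n-k+j$. For any pure input $\r=\proj{\psi}$, set $\s_j=M_{j-1}\r M_{j-1}^\dg$; then
\begin{equation*}
\r-M\r M^\dg=\sum_{j=1}^{k}\bigl(\s_j-N_j\s_jN_j^\dg\bigr),
\end{equation*}
and the triangle inequality for $\norm{\cdot}_{W_1}$ yields
\begin{equation*}
\norm{\r-M\r M^\dg}_{W_1}\le\sum_{j=1}^{k}\norm{\s_j-N_j\s_jN_j^\dg}_{W_1}.
\end{equation*}
The key observation is that $N_j$ is supported on a single qudit, so $\tr_{n-k+j}\bigl[\s_j-N_j\s_jN_j^\dg\bigr]=\tr_{n-k+j}\s_j-\tr_{n-k+j}\s_j=0$ by cyclicity of the trace on the $(n-k+j)$-th factor. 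Equation~(\ref{eq:w1normX}) of Lemma~\ref{le:relation with trace norm} therefore upgrades the $W_1$ norm to half the trace norm, giving $\norm{\s_j-N_j\s_jN_j^\dg}_{W_1}=\tfrac{1}{2}\norm{\s_j-N_j\s_jN_j^\dg}_{1}\le 1$ since the trace norm of a difference of two states is at most $2$. Summing the $k$ terms bounds every input by $k$, and taking the maximum over $\ket{\psi}$ yields $\cD(I^{\otimes n},M)\le k$.

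The only place that requires a bit of care is the reduction of each telescoping term to the single-qudit estimate: one must identify the correct qudit $i=n-k+j$ on which the partial trace of $\s_j-N_j\s_jN_j^\dg$ vanishes so that Lemma~\ref{le:relation with trace norm} applies. Once this is in hand, the matching bounds of the two steps give the claimed equality.
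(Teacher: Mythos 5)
Your proof is correct, and both bounds match what the paper does in spirit: the lower bound via $\ket{0}^{\otimes n}$ and the Hamming-distance formula of Lemma~\ref{le: recovery of classical W1} is identical to the paper's. For the upper bound the paper argues only the cases $k=1$ (exactly your single-term estimate: equal partial traces on the flipped qudit, hence $W_1$ distance at most one) and $k=2$ (via Property~\ref{eq:E(U1OU2,V1OV2)} of Proposition~\ref{pro:property}), and then asserts that $k>2$ is "similar." Your explicit telescoping $\r-M\r M^\dg=\sum_{j}(\s_j-N_j\s_jN_j^\dg)$ with $\s_{j+1}=N_j\s_jN_j^\dg$ is the clean way to make that assertion precise for all $k$ at once; it effectively inlines the triangle-inequality argument behind Property~\ref{eq:E(U1OU2,V1OV2)} and iterates it $k$ times, so your write-up is actually more complete than the paper's on this point. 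The one step worth double-checking is the claim $\tr_{n-k+j}[\s_j-N_j\s_jN_j^\dg]=0$: this holds because $N_j$ is a unitary supported only on qudit $n-k+j$, so tracing out that qudit removes the conjugation; your invocation of Eq.~(\ref{eq:w1normX}) and the bound $\tfrac12\norm{\s_j-N_j\s_jN_j^\dg}_1\le 1$ then closes the argument.
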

\begin{proof}
We show the claim for $k=1, 2$, and the claim for $k>2$ can be obtained by a similar way.
	
First we show that $\cD(I^{\otimes n}, X\otimes I^{\otimes(n-1)})=1$ up to permutations of the qudits. For any pure states $\r\in\cS_n$ and $\s=(X\otimes I^{\otimes(n-1)})\r(X\otimes I^{\otimes(n-1)})$, it holds that $\tr_1\r=\tr_1\s$, i.e. $\r$ and $\s$ are neighboring states. From Definition \ref{def:w1diatance}, the quantum $W_1$ distance assigns the distance at most one to any couple of neighboring states, so  $\cD(I^{\otimes n}, X\otimes I^{\otimes(n-1)})\leq 1$. On the other hand, $\cD(I^{\otimes n}, X\otimes I^{\otimes(n-1)})=\max_\r\norm{\r-\s}_{W_1}\geq\norm{\proj{00...0}-\proj{10...0}}_{W_1}=1$ by Lemma \ref{le: recovery of classical W1}. Hence, $\cD(I^{\otimes n},X\otimes I^{\otimes(n-1)})=1$. Using the fact that $\norm{\cdot}_{W_1}$ is invariant with respect to permutations of the qudits, $\cD(I^{\otimes n}, I\otimes X\otimes I^{(n-2)})=...=\cD(I^{\otimes n}, I^{\otimes(n-1)}\otimes X)=1$ is obtained.
	
Next we prove that $\cD(I^{\otimes n},  X^{\otimes 2}\otimes I^{\otimes(n-2)})=2$ up to permutations of the qudits. From Definition \ref{def:w1diatance}, it holds that $\cD(I^{\otimes n},  X^{\otimes 2}\otimes I^{\otimes(n-2)})\geq \norm{\proj{000...0}-\proj{110...0}}_{W_1}=2$. By Property \ref{eq:E(U1OU2,V1OV2)}, we have $\cD(I^{\otimes n},  X^{\otimes 2}\otimes I^{\otimes(n-2)})\leq \cD(I^{\otimes n}, X\otimes I^{\otimes(n-1)})+\cD(I^{\otimes n}, I\otimes X\otimes I^{\otimes(n-2)})=2$. The invariance under permutations can also be derived as the above case.	
\end{proof}

\section{Estimation of the closeness between operations in quantum circuit}
\label{Sec:app to qc}
In this section, we show that the $W_1$ distance  between unitary operations plays an important role in estimating the closeness between operations in quantum circuits. A small $\cD(U,V)$ implies that  any measurement performed on the states $U\ket{\ps}$ shows approximately the same measurement statistics as that of $V\ket{\ps}$, so $U$ and $V$ plays almost the same role in quantum circuits.
So the noisy operation simulates the ideal one well  when they become close under the distance.

As we all know, the set of unitary operations is continuous and thus we can never implement an arbitrary unitary operation exactly by a discrete set of gates. We can only approximate the unitary operation with a series of gates.  Let $U$ be the ideal unitary operation that we wish to implement, and $V$ be the unitary operation that is actually implemented under noise. To compare their effects in a quantum circuit, we assume that they are performed on the same state $\ket{\ps}$, where $\ket{\ps}\in \cS_n$ is an arbitrary state. The $W_1$ distance between them characterize how close their measurement outcome will be in terms of POVM. It is realized by deriving an upper bound of the difference in probability between measurement outcomes. 

\begin{proposition}
	\label{th: PU-PV}
Given two operations $U$ and $V$ performed on the same initial state $\ket{\ps}$.	
Let $M_m \geq 0$ be an element in a POVM performed on $U\ket{\ps}$ and $V\ket{\ps}$, with
 $P_U^{(m)}$ and $P_V^{(m)}$ being the probability of obtaining the outcome $m$ in the measurements, respectively. The difference between $P_U^{(m)}$ and $P_V^{(m)}$ is upper bounded by the quantum $W_1$ distance between $U$ and $V$ as
 \begin{eqnarray}
 	\abs{P_U^{(m)}-P_V^{(m)}}\leq 2\l_0(M_m) \cD(U,V),
 \end{eqnarray}
where $\l_0(M_m)\in (0,1]$ is the maximal eigenvalue of $M_m$.
\end{proposition}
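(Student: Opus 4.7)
The plan is to reduce the probability difference to a trace inner product, bound it by a matrix Hölder inequality to introduce the trace norm, and finally convert the trace norm to the $W_1$ norm using Lemma \ref{le:relation with trace norm}.

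First I would write the difference of outcome probabilities as
\begin{eqnarray}
P_U^{(m)}-P_V^{(m)}
=\tr\bigl[M_m\bigl(U\proj{\psi}U^\dg-V\proj{\psi}V^\dg\bigr)\bigr],
\end{eqnarray}
so that both operations enter only through the Hermitian traceless operator $X:=U\proj{\psi}U^\dg-V\proj{\psi}V^\dg\in\cM_n$. The goal is then to control $|\tr(M_mX)|$ using $\norm{X}_{W_1}$.

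Next I would apply the matrix Hölder inequality $|\tr(AB)|\le\norm{A}_\infty\norm{B}_1$ to get
\begin{eqnarray}
\bigl|P_U^{(m)}-P_V^{(m)}\bigr|
\le\norm{M_m}_\infty\,\norm{U\proj{\psi}U^\dg-V\proj{\psi}V^\dg}_1.
\end{eqnarray}
Since $M_m\ge 0$ as an element of a POVM, its operator norm equals its largest eigenvalue, $\norm{M_m}_\infty=\lambda_0(M_m)$, which lies in $(0,1]$ because $\sum_m M_m=I$.

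Then I would convert the trace norm into the $W_1$ norm. By the left-hand side of Lemma \ref{le:relation with trace norm}, for any $X\in\cM_n$ one has $\norm{X}_1\le 2\norm{X}_{W_1}$, hence
\begin{eqnarray}
\bigl|P_U^{(m)}-P_V^{(m)}\bigr|
\le 2\lambda_0(M_m)\,\norm{U\proj{\psi}U^\dg-V\proj{\psi}V^\dg}_{W_1}.
\end{eqnarray}
Finally, using Definition \ref{def:D(U,V)} (in the pure-state form (\ref{def:E(U,V)})), the $W_1$ norm on the right is at most $\cD(U,V)$, which yields the claim.

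I do not expect any serious obstacle: the argument is a straightforward concatenation of Hölder's inequality and the trace-norm/$W_1$-norm comparison already recorded as Lemma \ref{le:relation with trace norm}. The only small subtlety is ensuring the identification $\norm{M_m}_\infty=\lambda_0(M_m)$ and noting that $\lambda_0(M_m)\le 1$ (so the bound is genuinely nontrivial), both of which follow immediately from $0\le M_m\le I$ implied by the POVM condition.
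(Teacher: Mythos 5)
Your proof is correct and follows essentially the same route as the paper: both reduce $|P_U^{(m)}-P_V^{(m)}|$ to a trace-norm bound involving $\norm{M_m}_\infty=\lambda_0(M_m)$ and then pass to the $W_1$ norm via Lemma \ref{le:relation with trace norm} and Definition \ref{def:D(U,V)}. The only difference is cosmetic: you apply H\"older's inequality $|\tr(AB)|\le\norm{A}_\infty\norm{B}_1$ directly, whereas the paper first factors $M_m=N_mN_m^\dg$ and uses $\norm{ABC}_1\le\norm{A}_\infty\norm{B}_1\norm{C}_\infty$, which amounts to the same estimate.
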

\begin{proof}
Since $P_U^{(m)}$ and $P_V^{(m)}$ is the probability of obtaining the measurement outcome $m$, we have
\begin{eqnarray}
\abs{P_U^{(m)}-P_V^{(m)}}=&&\abs{\bra{\ps}U^\dg M_m U\ket{\ps}-\bra{\ps}V^\dg M_m V\ket{\ps}}.
\end{eqnarray}
The POVM operation $M_m$ is positive with the unique positive square root, denoted by $N_m$, i.e., $M_m=N_m N_m^\dg$ and $\sum_m M_m=I$. Hence,
\begin{eqnarray}
\abs{P_U^{(m)}-P_V^{(m)}}=&&\abs{\tr[N_m^\dg(U\proj{\ps}U^\dg)-V\proj{\ps}V^\dg)N_m]}
\\
=&&\left| \sum_k \l_k\Big(N_m^\dg(U\proj{\ps}U^\dg-V\proj{\ps}V^\dg)N_m\Big)\right|
\\
\leq&&\sum_k
\left|\l_k\Big(N_m^\dg(U\proj{\ps}U^\dg-V\proj{\ps}V^\dg)N_m\Big)\right|
\\
=&&\norm{N_m^\dg(U\proj{\ps}U^\dg-V\proj{\ps}V^\dg)N_m}_1,
\end{eqnarray}
where $\l_k(X)$ denotes the $k$-th eigenvalue of the operator $X$, and  $\l_0\geq\l_1\geq...\geq\l_k...$. The third equality comes from $\norm{X}_1=\sum_k\abs{\l_k(X)}$ for normal operators.

Using the fact that $\norm{ABC}_1\leq\norm{A}_{\infty}\norm{B}_1\norm{C}_\infty$ and $\norm{\r-\s}_1\leq2\norm{\r-\s}_{W_1}$, we have
\begin{eqnarray}
&&\norm{N_m^\dg(U\proj{\ps}U^\dg-V\proj{\ps}V^\dg)N_m}_1
\\
\leq&& \norm{N_m^\dg}_{\infty}\norm{U\proj{\ps}U^\dg-V\proj{\ps}V^\dg}_1\norm{N_m}_\infty\\
\leq&&2\norm{N_m^\dg}_{\infty}\norm{N_m}_\infty\norm{U\proj{\ps}U^\dg-V\proj{\ps}V^\dg}_{W_1}
\\
\leq&&2\norm{N_m^\dg}_{\infty}\norm{N_m}_\infty \cD(U,V)\\
=&&2\l_0(M_m) \cD(U,V),
\end{eqnarray}
where the equality holds for
\begin{eqnarray}
\norm{N_m}_\infty=s_0(N_m)=\sqrt{\l_0(M_m)}.
\end{eqnarray}
Here $s_0(X)$ denotes the maximal singular value of operator $X$.
\end{proof}

Proposition \ref{th: PU-PV} shows that if the distance between $U$ and $V$ is small enough, then any POVM performed on the states $U\ket{\ps}$ shows approximately the same measurement statistics as that of $V\ket{\ps}$. The operations $U$ and $V$ plays almost the same role in quantum circuits as their measurement outcomes occur with almost the same probability. So if a kind of noise takes the ideal operations to another one and they are close under the $W_1$ distance, then the noise has little effect on the ideal operation. 
From the perspective of unitary operation discrimination, a small $\cD(U,V)$ also implies that $U$ and $V$ cannot be perfectly distinguished.

We have characterized the distance between  individual gates in Proposition \ref{th: PU-PV}. In the quantum circuits, the realization of target operations always requires a sequence of unitary gates.  So it is important to obtain the distance between two sequences of gates.  In analogy to quantifying the distance between an entangled state and a product state, one may be interested in the distance    between a nonlocal quantum gate and the tensor product gate. 
\begin{proposition}
	\label{th: EUm}
Two sequences of multi-qubit unitary gates $U_tU_{t-1}...U_1$ and $V_{t}V_{t-1}...V_1$ acting on the state space $\cS_n$, where where $V_j$ can be decomposed as the tensor product of single-qubit gates, for $j=1,2,...,t$.  The quantum $W_1$ distance between them adds at most linearly with respect to the distance of each couple of gates,
\begin{eqnarray}
	\cD(U_tU_{t-1}...U_1, V_{t}V_{t-1}...V_1)
	\leq
	\sum_{k=1}^t \cD(U_k,V_k).
\end{eqnarray} 
\end{proposition}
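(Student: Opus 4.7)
The plan is to prove the inequality by induction on $t$, using three properties from Proposition~\ref{pro:property}: the triangle inequality (Property 3), right unitary invariance (Property 4), and the left invariance under single-qubit tensor products (Property 5). The hypothesis that each $V_j$ decomposes as a tensor product of single-qubit gates enters the argument precisely through Property 5, whose statement requires the left multiplier to lie in $\cU_1^{\otimes n}$.

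The base case $t=1$ is immediate. For the inductive step, I would insert the intermediate operation $V_tU_{t-1}\cdots U_1$ between the two sequences and apply the triangle inequality:
\begin{eqnarray}
\cD(U_tU_{t-1}\cdots U_1, V_tV_{t-1}\cdots V_1)
&\leq& \cD(U_tU_{t-1}\cdots U_1,\, V_tU_{t-1}\cdots U_1) \nonumber\\
&& +\, \cD(V_tU_{t-1}\cdots U_1,\, V_tV_{t-1}\cdots V_1).
\end{eqnarray}
By right unitary invariance, taking $M = U_{t-1}\cdots U_1$, the first term on the right collapses to $\cD(U_t, V_t)$. By Property 5 with $N = V_t$, which is admissible precisely because $V_t \in \cU_1^{\otimes n}$, the second term collapses to $\cD(U_{t-1}\cdots U_1, V_{t-1}\cdots V_1)$. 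Invoking the inductive hypothesis on the latter then yields the claimed bound $\sum_{k=1}^t \cD(U_k,V_k)$.

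The only real subtlety is the necessity of the hypothesis on the $V_j$'s. Property 5 genuinely fails for arbitrary left multipliers because $\norm{\cdot}_{W_1}$ is not invariant under global unitary conjugation, unlike the trace norm; this is exactly the reason the Proposition restricts to the single-qubit tensor-product case. Once this is recognized, the argument is a clean telescoping via the triangle inequality, and no further analytic input about the $W_1$ norm beyond the properties already recorded in Proposition~\ref{pro:property} is required. A natural direction to mention afterwards is that the symmetric roles of $U_k$ and $V_k$ in the bound suggest an analogous inequality when it is instead the $U_j$'s that factor through single-qubit gates, by inserting $U_t V_{t-1}\cdots V_1$ as the intermediate operation.
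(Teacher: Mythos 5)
Your proof is correct and follows essentially the same route as the paper: the paper also proves the bound by induction, inserting the intermediate operator $V_tU_{t-1}\cdots U_1$, collapsing the first term to $\cD(U_t,V_t)$ via right unitary invariance and the second to $\cD(U_{t-1}\cdots U_1, V_{t-1}\cdots V_1)$ via Property 5, which is exactly where the tensor-product hypothesis on the $V_j$'s is used. Your observation about why that hypothesis is needed (and the symmetric variant) is a sensible addition but does not change the argument.
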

\begin{proof}
We prove it by induction. First we show the case for $t=2$.
\begin{eqnarray}
&&\cD(U_2U_1,V_2V_1)\\
=&&\max_\r\norm{U_2U_1\r U_1^\dg U_2^\dg-V_2V_1\r V_1^\dg V_2^\dg}_{W_1}
\\
\leq&&\max_\r
\norm{U_2 (U_1\r U_1^\dg) U_2^\dg-
	V_2(U_1\r U_1^\dg) V_2^\dg}_{W_1}
+\max_\r
\norm{	V_2U_1\r U_1^\dg V_2^\dg-	
V_2V_1\r V_1^\dg V_2^\dg}_{W_1}
\\
= && \cD(U_2,V_2)+\cD(V_2U_1,V_2V_1).
\end{eqnarray}
Using property \ref{eq:leftinvariant} of $\cD(U,V)$ in Proposition \ref{pro:property}, one has
\begin{eqnarray}
\cD(V_2U_1,V_2V_1)=\cD(U_1,V_1).
\end{eqnarray}
So we have
\begin{eqnarray}
\cD(U_2U_1,V_2V_1)\leq
\cD(U_2,V_2)+\cD(U_1,V_1).
\end{eqnarray}
Suppose the case for $t-1$ holds, i.e., $\cD(U_{t-1}...U_1, V_{t-1}...V_1)
\leq
\sum_{k=1}^{t-1} \cD(U_k,V_k)$. Then we have 
\begin{eqnarray}
&&\cD(U_t U_{t-1}...U_1, V_{t}V_{t-1}...V_1)
\\
\leq&&
\cD(U_t,V_t)+\cD(U_{t-1}U_{t-2}...U_1, V_{t-1}V_{t-2}...V_1)
\\
\leq&&\sum_{k=1}^t \cD(U_k,V_k),
\end{eqnarray}
which is the desired result.
\end{proof}
 The above fact characterizes the distance between two sequences of gates. One sequence of gates consists of the gates that can be decomposed as the tensor product of single-qubit gates. The other one consists of arbitrary multi-qubit gates.  It shows that the distance of the entire sequence of gates is at most the sum of the distance of individual  gates.

Proposition \ref{th: PU-PV} and \ref{th: EUm} can be applied to estimate the measurement outcome of the circuits containing different sequence of gates $U_1,U_2,...,U_t$ and $V_1,V_2,...,V_t$. In practice, we set a tolerance $\a>0$ of the probability that two circuits show the same measurement outcome.
We can estimate how close the effects of these gates are in the circuits, i.e., whether the probability of different measurement outcomes are within the tolerance, only by the distance $\cD(U_k,V_k)$. To be specific, to make the probability of different measurement outcomes be within the tolerance $\a$, it suffices that
\begin{eqnarray}
	\label{ieq:PUt...U1}
\abs{P_{U_{t}...U_1}^{(m)}-P^{(m)}_{V_{t}...V_1}}
\leq 
2\l_0(M_m)\sum_{k=1}^t \cD(U_k,V_k)
\leq\a,
\end{eqnarray}
where $P_{U_{t}...U_1}^{(m)},P^{(m)}_{V_{t}...V_1}$ and $\l_0(M_m)$ follows the symbolic hypothesis in Proposition \ref{th: PU-PV}.
The inequality (\ref{ieq:PUt...U1}) holds when
\begin{eqnarray}
 \cD(U_k,V_k)
\leq \frac{\a}{2t\max_m\{\l_0(M_m)\}}.
\end{eqnarray}
It is shown in FIG. \ref{fig:eg1}.

\begin{figure}[htp]
	\center{\includegraphics[width=12cm]  {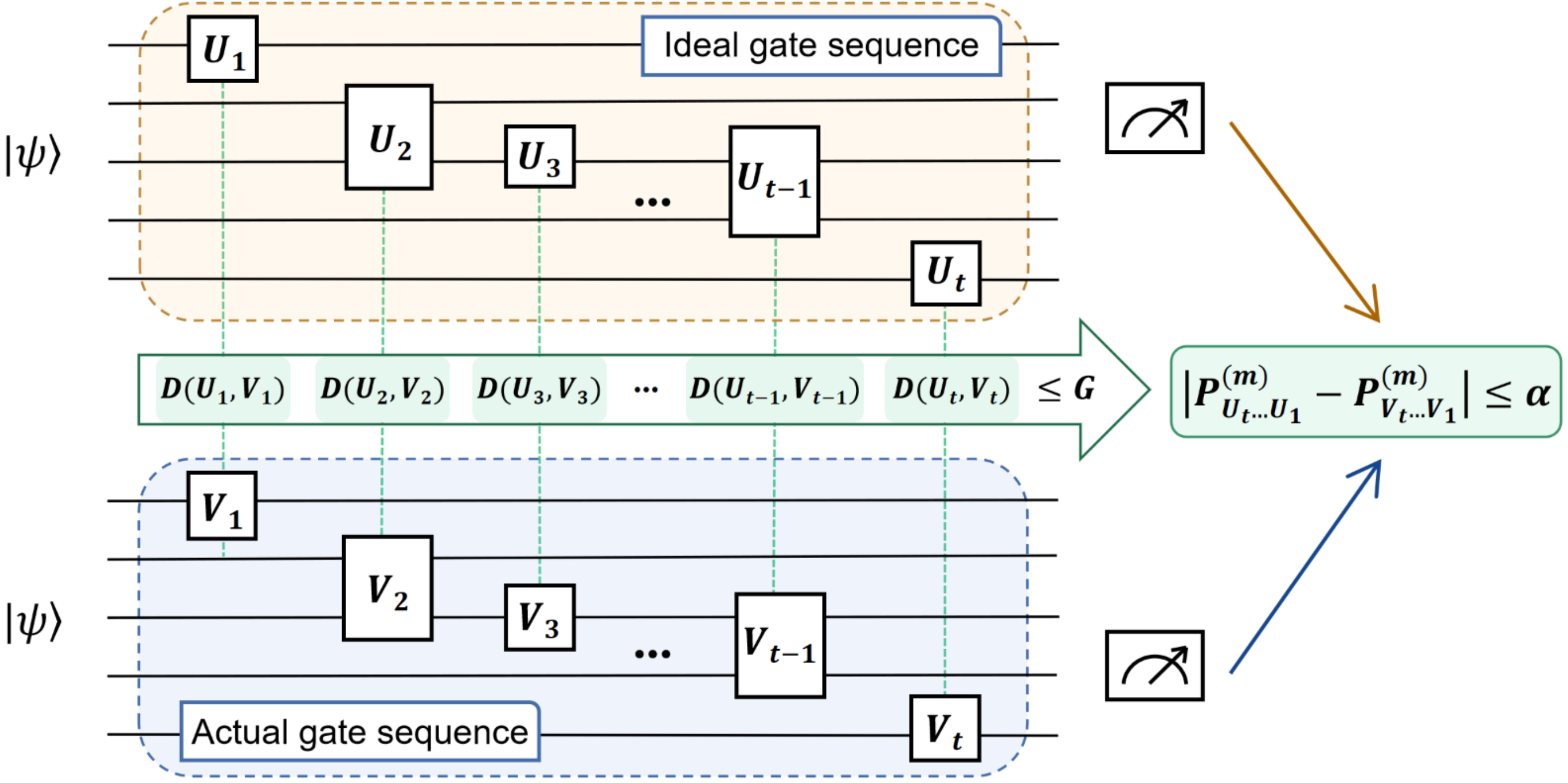}}
	\caption{Diagram showing the evaluation for the measurement outcome of the circuits containing ideal gate sequence $U_1,U_2,...,U_t$ and actual gate sequence $V_1,V_2,...,V_t$. 
	The two sequence of gates are performed on the same initial state $\ket{\ps}$.	
	Let $M_m \geq 0$ be an element in an arbitrary POVM, with
	$P_{U_t...U_1}^{(m)}$ and $P_{V_t...V_1}^{(m)}$ being the probability of obtaining the outcome $m$ in the measurements, respectively.  We set a tolerance $\a>0$ of the probability that two circuits show the same measurement outcome.	To make $\abs{P_{U_{t}...U_1}^{(m)}-P^{(m)}_{V_{t}...V_1}}	\leq \a$, it suffices to guarantee $D(U_k,V_k)\leq G= \a/(2t\max_m\{\l_0(M_m)\})$. }
	\label{fig:eg1}
\end{figure}
Now we show an example of the above process.
\begin{example}
A sequence of ideal qubit gates $U_1,U_2,...,U_5$ in the quantum circuit is subject to the unitary noise process $\cE_\t=\diag\{e^{i\t},e^{-i\t}\}$, where $\t\in[0,\pi]$ is the parameter related to noise.  The ideal gates are transformed into a sequence of noisy gates $V_1,V_2,...,V_5$, where
\begin{eqnarray}
U_k=\bma
e^{\a_k i}&0\\
0&e^{\b_k i}
\ema, \quad
V_k=U_k\cE_\t=\bma
e^{(\a_k+\t)i}&0\\
0&e^{(\b_k-\t)i}
\ema,
\end{eqnarray}  
 for  $k=1,2,...,5$.
Using the results for the calculation of $\cD(U,V)$ in Sec. \ref{sec:case n=1}, one has 
\begin{eqnarray}
\cD(U_k,V_k)=\abs{\sin\t}.
\end{eqnarray}
Suppose the following POVM  $\{M_m:m=1,2,...,8\}$ is carried out in the circuit,
\begin{eqnarray}
&&M_1=\bma
\frac{1}{8}&-\frac{1}{8}i\\
\frac{1}{8}i&\frac{1}{8}
\ema,
\quad
M_2=\bma
\frac{1}{8}&\frac{1}{8}i\\
-\frac{1}{8}i&\frac{1}{8}
\ema,
\quad
M_3=\bma
\frac{1}{8}&\frac{1}{8}\\
\frac{1}{8}&\frac{1}{8}
\ema,
\quad
M_4=\bma
\frac{1}{8}&-\frac{1}{8}\\
-\frac{1}{8}&\frac{1}{8}
\ema,\\
&&M_5=\bma
\frac{1}{8}&\frac{1}{8}e^{\pi i/4}\\
\frac{1}{8}e^{-\pi i/4}&\frac{1}{8}
\ema,
M_6=\bma
\frac{1}{8}&\frac{1}{8}e^{5\pi i/4}\\
\frac{1}{8}e^{-5\pi i/4}&\frac{1}{8}
\ema,
M_7=\bma
\frac{1}{4}&0\\
0&0
\ema,
M_8=\bma
0&0\\
0&\frac{1}{4}
\ema.
\end{eqnarray} 
We set the probability error tolerance $\a=30\%$. To make the probability of different measurement outcomes be within the tolerance $\a$ for any initial state $\ket{\ps}$, i.e. 	$\abs{P_{U_{t}...U_1}^{(m)}-P^{(m)}_{V_{t}...V_1}}\leq \a$,  it suffices that
\begin{eqnarray}
	\cD(U_k,V_k)=\abs{\sin\t}
	\leq \frac{\a}{10\max_m\{\l_0(M_m)\}}=0.12.
\end{eqnarray}
It is the sufficient condition for each couple of gates concerning only the noise. It implies that each noisy gate $V_k$ simulates the ideal gate $U_k$   within the tolerance effectively if the parameter of local noise $\t\in[0, \arcsin(0.12)]$ or $\t\in[\pi-\arcsin(0.12),\pi]$.
\end{example}

\section{The $W_1$ gate error rate under noise} 
\label{sec:gate error rate}
In this section, we introduce an measure of the realization of quantum gates, named  $W_1$ gate error rate. We show its rationality compared with the gate error rate induced by other norm, and estimate the $W_1$ gate error rate with the help of $W_1$ distance between operations. On the basis of that, we establish the relation between the $W_1$ error rate and two real cost measure of recover operation, including circuit cost and experiment cost. The $W_1$ error rate is related to the practical cost of eliminating the effect of noise on a specific type of gate, i.e., the low $W_1$ error rate of a gate implies that it will cost less to eliminate the effect of noise on it.  Further we show two examples considering the implementation under depolarizing and unitary noise  for arbitrary single-qubit gate and CNOT gate, respectively.

Following the idea of proposing gate error rate in Sec. \ref{sec:preliminary error rate}, we define the error rate of $U $ by the quantum $W_1$ distance as follows.
\begin{definition}
	\label{def:errorrate}
The $W_1$ error rate of the implementation of $n$-qubit unitary gate $U$ is given by
\begin{eqnarray}
e(U,\cV):=\frac{1}{n}\max_{\r}\norm{U\r U^\dg-\cV(\r)}_{W_1},
\end{eqnarray}
where $\r\in \cS_n$ and $\cV$ is a channel that describes the noisy implementation of $U$.
\end{definition}
For any states $\r,\s\in\cS_n$, we have $\norm{\r-\s}_{W_1} \in[0,n]$ by Definition \ref{def:w1diatance} and hence the error rate $ e(U)\in[0,1]$. Compared with the error rate induced by Schatten 1-norm in (\ref{def:e1(u,v)}) and diamond norm in (\ref{def:ed(u,v)}), the following relation can be obtained
\begin{eqnarray}
e(U,\cV)\leq e_1(U,\cV) \leq e_\diamond(U,\cV),
\end{eqnarray} 	
where the first inequality comes form Lemma \ref{le:relation with trace norm}, and the second one follows directly from their definitions. 

 As we all know, quantum error correction (QEC) is a two stage process: the error detection step, followed by the recovery step using conditioned unitary operations \cite{nielsen2010quantum}. We denote the operations used in the second step as recovery operations, which are performed to eliminate the influence of noise on specific qubit. Compared with the former error rate induced by other distance like fidelity, Schatten 1-norm, and diamond norm, the error rate $e(U)$ has a better explanation from the perspective of experiment cost for the recovery operations. It comes from the property of $W_1$ norm. From (\ref{eq:r'Or''}) in Lemma \ref{le:tensorization}, operations which reduce the distance between two states over a portion of their qubits will proportionally reduce the total distance over all of the qubits, while no unitarily invariant distance have this property \cite{de2021the,bobak2022learning}. For example, an ideal gate $U_{id}=\s_x\otimes\s_x$ is performed on $\r=\proj{00}$ to generate $\r_{id}$. Two noisy implementation of $U_{id}$ shows $U_{ac}^{(1)}=I^{\otimes2}$ and $U_{ac}^{(2)}=I\otimes\s_x$, whose  resulting states are $\r_{ac}^{(1)}$ and $\r_{ac}^{(2)}$,  respectively. Since $\r_{ac}^{(k)}$ are orthogonal to the ideal state, all the distance induced by  unitary-invariant norms and fidelity shows that $\norm{\r_{id}-\r_{ac}^{(1)}}=\norm{\r_{id}-\r_{ac}^{(2)}}$.  Using quantum $W_1$ norm, we have $\norm{\r_{id}-\r_{ac}^{(1)}}_{W_1}=2>1=\norm{\r_{id}-\r_{ac}^{(2)}}_{W_1}$. In the recovery step of QEC, two gates $\s_x\otimes\s_x$ are required for $\r_{ac}^{(1)}$, and only one gate $\s_x$ on the first qubit  is required for $\r_{ac}^{(2)}$.  So $\r_{ac}^{(1)}$ is further away from $\r_{id}$ than $\r_{ac}^{(2)}$ in terms of experiment resource, which is consistent with the distance induced by $W_1$ norm.

We demonstrate Definition \ref{def:errorrate} as follows. Consider the noise process described by mixed unitary channel $\cV=\cG\circ\cE$. Here $\cG(\cdot)=U(\cdot)U^\dg$ denotes the ideal implementation of gate $U\in\cU_n$ and $\cE$ is the channel describes the effect of noise.
First we analyze a general noise process described by the generalized quantum operations comprising finite linear combinations of unitary quantum operations (also called mixed unitary channel) \cite{girard2022the},
\begin{eqnarray}
	\label{eq:cE}
	\cE(\cdot)=\sum_{k=1}^N p_kV_k(\cdot)V_k^\dg,
\end{eqnarray}
where $(p_1,p_2,...,p_N)$ is a probability vector and $V_1,V_2,...,V_N\in\cU_n$. Such a channel is considered as many natural examples of noisy channels including the dephasing and depolarizing channels are mixed unitary \cite{burrell2009geometry}.
In the presence of this noise, the error rate of gate $U$ is
\begin{eqnarray}
	e(U, \cV)=\frac{1}{n}\max_\r\norm{U\r U^\dg-U(\sum_{k=1}^N p_kV_k\r V_k^\dg)U^\dg}_{W_1}.
\end{eqnarray}
Since calculating $	e(U, \cV)$ directly is not an easy task, we can derive its upper bound as follows
\begin{eqnarray}
		e(U, \cV)\leq&&\sum_kp_k\left\{\frac{1}{n}\max_\r\norm{U\r U^\dg-UV_k\r V_k^\dg U^\dg}_{W_1}\right\}
	\\
	\label{ieq:e(u,v)D(u,vk)}
	=&&\sum_kp_k\left\{\frac{1}{n}\cD(I,UV_kU^\dg)\right\},
\end{eqnarray}
where the inequality is derived from the convexity of $\norm{\cdot}_{W_1}$. The upper bound in (\ref{ieq:e(u,v)D(u,vk)}) can be used to establish the relation between $W_1$ error rate and its recovery operation.

To analyze the upper bound of $	e(U, \cV)$, it suffices to consider the noise process described by each unitary error $V_k$. For convenience, we denote $E=V_k \in\cU_n$ is an arbitrary element in (\ref{eq:cE}).   It is a unitary operation with eigenvalues $e^{i\t_j}$, where $\t_j\in[0,2\pi)$ and $j=1,2,...,2^n$. The channel describing this unitary error and the corresponding noisy implementation of $U$  are respectively
\begin{eqnarray}
	\label{eq:cE=E}
&&\cE_E(\cdot)=E (\cdot)E^\dg,\\
\label{eq:cE=G}
&&\cV_E=\cG\circ \cE_E,
\end{eqnarray}
where $\cG(\cdot)=U(\cdot)U^\dg$ denotes the ideal implementation of gate $U$.
 The error rate of gate $U$ under unitary noise is 
\begin{eqnarray}
\label{eq:etU}
e(U,\cV_E)=&&
\frac{1}{n}\max_{\r}\norm{U\r U^\dg-UE\r E^\dg U^\dg}_{W_1}=\frac{1}{n}\cD(I,P^E),
\end{eqnarray}
where 
\begin{eqnarray}
P^E=U E^\dg U^\dg
\end{eqnarray}
is the recovery operation of ideal gate $U$ in the presence of unitary error described by $E$. Note that performing $P^E$ on the noisy gate $UE$ can correct the influence of noise, i.e. $P^E(UE)=U$. The second equality in (\ref{eq:etU}) comes from Properties \ref{eq:rightinvariant} and \ref{eq:E(I,Udg)} in Proposition \ref{pro:property}.

Now we establish the relation between the error rate $e^E(U)$ and the experiment cost of recover operation. The circuit complexity of a unitary operation is defined as the minimal number of basic gates needed to generate this operation \cite{nielsen2010quantum}. Circuit cost of quantum circuits, is then proposed to be a lower bound for the circuit complexity \cite{nielsen2006geometry,nielsen2006bouns}. Experiment cost $\cR(U)$, showing  quantum limit on converting quantum resources including energy and time to computational resources,  is also an important complexity measure \cite{Girolami2021Quantifying}.  Recently, the lower bounds for circuit cost and experiment cost are obtained in terms of the quantum Wasserstein complexity measure \cite{2208.06306}. We rephrase their results by our quantum $W_1$ distance between unitary operations. That is,
\begin{eqnarray}
	\label{ieq:CUgeq}
\cC(U)\geq&&4\sqrt{2}\cD(I,U),
\\
	\label{ieq:RUgeq}
	\cR(U)\geq&&\frac{1}{2}\cD(I,U).
\end{eqnarray}

Using (\ref{eq:etU})-(\ref{ieq:RUgeq}), we can obtain that
\begin{eqnarray}
		\label{ieq:CUleq}
e(U, \cV_E)\leq && \frac{1}{4\sqrt{2}n}\cC(P^E),
\\
	\label{ieq:RUleq}
e(U, \cV_E)\leq &&\frac{2}{n}\cR(P^E),
\end{eqnarray}
 Eqs. (\ref{ieq:CUleq}) and (\ref{ieq:RUleq}) imply that the $W_1$ error rate $e(U, \cV_E)$ provides a lower bound for circuit and experiment cost to realize the recovery operation under the unitary noise described by $\cE_E$. That is to say, $e(U, \cV_E)$ is related to quantum resources required to  eliminate the influence of noise $\cE_E$ during the implementation of $U$. Recall that in (\ref{ieq:e(u,v)D(u,vk)}), the error rate of a mixed unitary channel $e(U,\cV)$ can be upper bounded by convex sum of the error rate of each Kraus operator, i.e.
 \begin{eqnarray}
e(U,\cV)\leq \sum_k p_ke(U,\cV_{k}),
 \end{eqnarray} 
 where $\cV_k$ is defined by setting $E=V_k$ in (\ref{eq:cE=E}) and (\ref{eq:cE=G}). Using (\ref{ieq:CUleq}) and (\ref{ieq:RUleq}), we have
 \begin{eqnarray}
 	\label{ieq:e(U,CV),C}
e(U,\cV)\leq&&\sum_k\frac{p_k}{4\sqrt{2}n}\cC(P^{V_k}),
 	\\
 	\label{ieq:e(U,CV),R}
e(U,\cV)\leq&&\sum_k\frac{2p_k}{n}\cR(P^{V_k}). 	
 \end{eqnarray}
So the lower bound of  circuit and experiment cost for the recover operation under arbitrary noise process is obtained. Thus the $W_1$ error rate is a new figure of merit concerning the noisy gate and  the experimental requirement to eliminate the influence of noise on it.

\begin{example}
We consider the depolarizing noise and unitary noise acting on a single qubit. The noise process is given by the channels respectively,
\begin{eqnarray}
&&\cE^{dep,1}_p(\r):=(1-p)\r+p\frac{\bbI_2}{2},
\\
&&\cE^{uni}_\t(\r):=E_\t\r E_\t^\dg,
\end{eqnarray}
where $p\in [0,1]$ and $E_\t$ is a unitary operator with eigenvalues $e^{\pm \t i}$ for $\t\in[0,\pi]$.
The error rate of a single-qubit gate $U$ is
\begin{eqnarray}
e(U, \cV_{dep,1})=&&\max_{\r}\norm{U\r U^\dg-[(1-p)U\r U^\dg+\frac{p}{2}\bbI_2]}_{W_1}=\frac{p}{2},
\\
e(U,\cV_{uni})=&&\max_{\r}\norm{U\r U^\dg-UE_\t\r E_\t^\dg U^\dg}_{W_1}=\sqrt{1-\cos2\t}.
\end{eqnarray}
The average gate fidelity for depolarizing noise and unitary noise is respectively \cite{sanders2015bounding},
\begin{eqnarray}
\varphi^{dep}_p=1-\frac{p}{2},\quad
\varphi^{u}_\t=\frac{1}{3}+\frac{2}{3}\cos^2\t.
\end{eqnarray}
The error rate induced by the diamond norm is \cite{sanders2015bounding}
\begin{eqnarray}
e_{\diamond}(U,\cV_{dep,1})=\frac{3}{4}p, \quad
e_{\diamond}(U,\cV_{uni})=\sin\t. 
\end{eqnarray}
\end{example}

Generally, the advantage of quantum $W_1$ norm appears for multi-qubit case. We show the example for the $W_1$ error rate of noisy implementation of CNOT gate in the presence of two typical kinds of noise.
\begin{example}
	\label{ex:twoqubit}
We consider the noisy implementation of CNOT gate on $\cS_2$ under unitary noise and depolarizing noise  as follows.
\begin{enumerate}
	\item 
	We consider the $W_1$ error rate of noisy implementation of CNOT gate under the following unitary noise channel
	\begin{eqnarray}
		\cE_{CP}(\r)=U_{CP}\r U_{CP}^\dg, \; \mbox{for}\; U_{CP}=\diag\{1,1,1,e^{i\t }\},\;\t\in[0,2\pi).
	\end{eqnarray}
	We denote the actual implementation of CNOT gate in the presence of unitary noise as $\cV_{uni,2}= \cG_{CN}\circ \cE_{CP}$. From  Definition \ref{def:errorrate}, it can be given as 
	\begin{eqnarray}
		\label{eq:e(cnot uni2)}
		e(\mbox{CNOT},\cV_{uni,2})=\frac{1}{2}\cD(I,U_{CN}U_{CP}U_{CN}).
	\end{eqnarray}
	By some calculations shown in Appendix \ref{sec:D(I,X)},  the $W_1$ error rate  of CNOT gate under unitary noise is 
	\begin{eqnarray}
		e(\mbox{CNOT},\cV_{uni,2})=\frac{1}{\sqrt{2}}\sin\frac{\t}{2}.
	\end{eqnarray}
From (\ref{ieq:CUleq}) and (\ref{ieq:RUleq}), the lower bounds for circuit cost and experiment cost are  $\cC(U_{CN}U_{CP}U_{CN})=4\sqrt{2}ne(\mbox{CNOT},\cV_{uni,2})=8\sin\frac{\t}{2}$ and $\cR(U_{CN}U_{CP}U_{CN})=\frac{n}{2}e(\mbox{CNOT},\cV_{uni,2})=\frac{1}{\sqrt{2}}\sin\frac{\t}{2}$, respectively, which is the
quantum resource required to  eliminate the influence of noise $\cV_{uni,2}$ during the implementation of CNOT gate.
	\item The depolarizing channel acting on $\cS_2$ is
\begin{eqnarray}
	\label{def:twodepolarizing}
	\cE^{dep,2}_p(\r):=&&(1-p)\r+p\frac{\bbI_4}{4}\\
	=&&(1-p)\r+\frac{p}{16}\sum_{s,t=0}^3(X^sZ^t)(\r)(X^sZ^t)^\dg,
\end{eqnarray}
where $X=\sum_{q=0}^3\ket{q\oplus 1}\bra{q}, Z=\sum_{q=0}^3 i^q\ket{q}\bra{q}$. We denote 
$\cV_{dep,2}= \cG_{CN}\circ \cE^{dep,2}_p$
as the actual implementation of CNOT gate in the presence of depolarizing noise, for $\cG_{CN}(\cdot)=U_{CN}(\cdot)U_{CN}$. The $W_1$ error rate of the CNOT gate can be estimated as follows
\begin{eqnarray}
	e(\mbox{CNOT}, \cV_{dep,2})=
	\frac{1}{2}\max_\r \norm{\cG_{CN}(\r)-\cG_{CN}\circ\cE^{dep,2}_p(\r)}_{W_1}\in[\frac{3}{8}p,\frac{3}{4}p],
\end{eqnarray}
where the range  is derived from Lemma \ref{le:relation with trace norm}, i.e., $\frac{3}{8}p=\frac{p}{2}\max_\r\frac{1}{2}\norm{U_{CN}\r U_{CN}-\frac{\bbI_4}{4}}_{1}\leq	e(\mbox{CNOT}, \cV_{dep,2})
\leq\frac{p}{2}\max_\r\norm{U_{CN}\r U_{CN}-\frac{\bbI_4}{4}}_{1}=\frac{3}{4}p
$.
Using (\ref{ieq:e(u,v)D(u,vk)}) and  (\ref{def:twodepolarizing}), we have
\begin{eqnarray}
	e(\mbox{CNOT}, \cV_{dep,2})\leq\frac{p}{32}\sum_{s,t=0}^3\cD(I,U_{CN}X^sZ^tU_{CN}).
\end{eqnarray}
From (\ref{ieq:e(U,CV),C}) and (\ref{ieq:e(U,CV),R}), the average lower bounds of circuit cost and experiment cost with respect to the depolarizing noise are $\frac{p}{16}\sum_{s,t\neq 0}\cC(U_{CN}X^sZ^tU_{CN})\geq3\sqrt{2}p$ and $\frac{p}{16}\sum_{s,t\neq 0}\cR(U_{CN}X^sZ^tU_{CN})\geq\frac{3}{8}p$, respectively. 
\end{enumerate}

\end{example}

\section{Conclusion}
\label{Sec:conclusion}
In summary, we have introduced the quantum Wasserstein distance between unitary operations, which characterizes the local distinguishability of operations. We presented its properties and showed its analytical calculation. The closeness between operations can be estimated in quantum circuits with the quantum Wasserstein distance between operations. The smaller the distance between two operations is, the similar their measurement outcome will be in the circuit.
As an application, we introduced the $W_1$ error rate by the distance. We showed its estimation by the quantum Wasserstein distance between operations,  and established the relation between the $W_1$ error rate and two real cost measure of recover operation, including circuit cost and experiment cost. We showed two examples
considering the implementation under depolarizing and unitary noise for arbitrary single-qubit gate
and CNOT gate, respectively.

Many problems arising from this paper can be further explored. The calculation of the quantum Wasserstein distance between operations requires taking the maximization over all states. The efficient approximation of that may be constructed by sampling method in \cite{Yiyou20221quantum}, and its optimal design of sampling circuit  may be given. As an similarity measure of operations, the quantum Wasserstein distance between operations may be employed to design the loss functions in quantum operation learning and make the learning more efficient. Besides, we have shown that $\cD(I,\rm CNOT)=\sqrt{2}$ and $\cD(I,\rm SWAP)=2$. In \cite{chen2016entangling}, it has been obtained that the entangling power of CNOT and SWAP gates are one and two ebits, respectively. So our distance may be developed to characterize more properties of unitary operations including the entangling power.

\section*{ACKNOWLEDGMENTS}
 The authors were supported by the NNSF of China (Grant No. 11871089) and the Fundamental Research Funds for the Central Universities (Grant No. ZG216S2005).

\appendix

\section{The $W_1$ distance between $I$ and generalized controlled phase gate}
\label{sec:D(I,X)}
We present the calculation of the distance between $I$ and generalized controlled phase gate $U_\t^{(k)}$, which is the diagonal two-qubit operation whose $k$-th diagonal entry is $e^{i\t }$ and other diagonal entries are 1 for $k=1,2,3,4$.

 First we consider $\cD(I,U_\t^{(3)})$, as  $U_\t^{(3)}=U_{CN} U_{CP} U_{CN}$ will be used  in
  (\ref{eq:e(cnot uni2)}) from Example \ref{ex:twoqubit}, where $U_{CP}$ and $U_{CN}$ are given in Sec. \ref{sec:notation}.
We rephrase Proposition \ref{pro:D(I,X1)} from Sec. \ref{sec:calculation} for convenience. 
\begin{proposition}
	\label{pro:D(I,X2)}
	The quantum $W_1$ distance between $I$ and controlled-phase gate $U_\t^{(3)}=U_{CN} U_{CP} U_{CN}=
	\diag\{1,1,e^{i\t},1\}$ is equal to $\sqrt{2}\sin\frac{\t}{2}$, i.e.
	\begin{eqnarray}
		\label{def:D(I,X)=}
		\cD(I, U_{CN} U_{CP} U_{CN})
		=&&\max_{\r\in\cS_2}\norm{\r-(U_{CN} U_{CP} U_{CN})\r(U_{CN} U_{CP} U_{CN})^\dg}_{W_1}\\
		=&&\sqrt{2}\sin\frac{\t}{2}.
	\end{eqnarray}
\end{proposition}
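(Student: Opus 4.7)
I would start by parametrizing an arbitrary two-qubit pure state $\ket{\psi}=\a\ket{00}+\b\ket{01}+\g\ket{10}+\d\ket{11}$ and computing $X=\proj{\psi}-U_\t^{(3)}\proj{\psi}U_\t^{(3)\dg}$ directly. Since $U_\t^{(3)}=\diag(1,1,e^{i\t},1)$ differs from $I$ only in its $\ket{10}$ entry, $X$ has the explicit form
\begin{equation*}
X=(1-e^{i\t})\bar\g\bigl[\a\ketbra{00}{10}+\b\ketbra{01}{10}+\d\ketbra{11}{10}\bigr]+\mathrm{h.c.},
\end{equation*}
and in particular $\norm{X}_1=4|\g|\sin(\t/2)\sqrt{|\a|^2+|\b|^2+|\d|^2}$. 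The crude bound from Lemma~\ref{le:relation with trace norm} gives only $\cD(I,U_\t^{(3)})\le 2\sin(\t/2)$, which is not tight, so a sharper argument is needed on both sides.

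\textbf{Lower bound via Corollary~\ref{cor:geqnorm1}.} The two partial traces simplify to
\begin{equation*}
\tr_2 X=(1-e^{i\t})\bar\g\a\ketbra{0}{1}+\mathrm{h.c.},\qquad \tr_1 X=(1-e^{i\t})\bar\g\d\ketbra{1}{0}+\mathrm{h.c.},
\end{equation*}
with trace norms $4|\g\a|\sin(\t/2)$ and $4|\g\d|\sin(\t/2)$. Hence $\norm{X}_{W_1}\ge 2|\g|(|\a|+|\d|)\sin(\t/2)$. Setting $\b=0$ and applying Cauchy--Schwarz $(|\a|+|\d|)^2\le 2(|\a|^2+|\d|^2)=2(1-|\g|^2)$ reduces the maximization to $|\g|\sqrt{1-|\g|^2}$, optimal at $|\g|=1/\sqrt 2$ with $|\a|=|\d|=1/2$, yielding $\cD(I,U_\t^{(3)})\ge\sqrt 2\sin(\t/2)$.

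\textbf{Upper bound via explicit decomposition.} This is the harder direction. I would exhibit an optimal $X=X^{(1)}+X^{(2)}$ with $\tr_i X^{(i)}=0$ and use the definition of $\norm{\cdot}_{W_1}$. The three rank-one summands in $X$ split naturally by tracial behavior: $\ketbra{00}{10}=\ketbra{0}{1}_1\otimes\ketbra{0}{0}_2$ kills $\tr_1$, $\ketbra{11}{10}=\ketbra{1}{1}_1\otimes\ketbra{1}{0}_2$ kills $\tr_2$, and $\ketbra{01}{10}=\ketbra{0}{1}_1\otimes\ketbra{1}{0}_2$ kills both. Writing the corresponding Hermitian pieces as $A$, $B$, $C$, I would set $X^{(1)}=A+\l C$ and $X^{(2)}=B+(1-\l)C$ for a free parameter $\l\in[0,1]$; the trace constraints are automatic. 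Because $A+\l C$ collapses to the rank-two operator $\bar\g(1-e^{i\t})(\a\ket{00}+\l\b\ket{01})\bra{10}+\mathrm{h.c.}$, its trace norm is $4|\g|\sin(\t/2)\sqrt{|\a|^2+\l^2|\b|^2}$, and similarly $\norm{X^{(2)}}_1=4|\g|\sin(\t/2)\sqrt{|\d|^2+(1-\l)^2|\b|^2}$. The stationary point $\l^\star=|\a|/(|\a|+|\d|)$ collapses their sum to $4|\g|\sin(\t/2)\sqrt{(|\a|+|\d|)^2+|\b|^2}$.

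\textbf{Finishing.} Halving and applying Cauchy--Schwarz $(|\a|+|\d|)^2+|\b|^2\le 2(|\a|^2+|\b|^2+|\d|^2)=2(1-|\g|^2)$ gives $\norm{X}_{W_1}\le 2\sqrt 2\,|\g|\sqrt{1-|\g|^2}\sin(\t/2)\le \sqrt 2\sin(\t/2)$, attained at the same state witnessing the lower bound. Combining both directions yields equality. The main obstacle is the upper bound: neither the trace-norm estimate in Lemma~\ref{le:relation with trace norm} nor a naive split of $X$ reaches $\sqrt 2\sin(\t/2)$; it is the free parameter $\l$ paired with the rank-two structure of $A+\l C$ and $B+(1-\l)C$ that closes the factor-of-$\sqrt 2$ gap.
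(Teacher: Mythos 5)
Your proof is correct, and both directions land where the paper's do, but the upper bound is executed by a genuinely different (and leaner) route. The lower bound is identical to the paper's: both apply Corollary~\ref{cor:geqnorm1} to the two reduced states, obtain $2|\g|(|\a|+|\d|)\sin\frac{\t}{2}$ (the paper's $2\sin\frac{\t}{2}|a_{1,0}|(|a_{0,0}|+|a_{1,1}|)$), and maximize to $\sqrt2\sin\frac{\t}{2}$ at the same witness state. For the upper bound the paper also exhibits a feasible decomposition in the definition of $\norm{\cdot}_{W_1}$, but it restricts the pieces $F^{(i)}$ to a normalized pure-state-difference ansatz, parametrizes them by eight angles, and verifies solvability of the coupled system (\ref{eqset:Xcab1})--(\ref{eqset:Xcab4}), with the margin cases $a_{1,0}\in\{0,1\}$ and $a_{0,0}=a_{1,1}=0$ treated separately; the structural content — send the $\ketbra{00}{10}$ block to the $\tr_1$-traceless piece, the $\ketbra{11}{10}$ block to the $\tr_2$-traceless piece, and share the doubly traceless $\ketbra{01}{10}$ block between them — is exactly your $X^{(1)}=A+\l C$, $X^{(2)}=B+(1-\l)C$. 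What your version buys is a closed-form trace norm for each piece and a one-parameter optimization whose stationary point $\l^\star=|\a|/(|\a|+|\d|)$ collapses the sum by the equality case of Minkowski's inequality, yielding the explicit state-dependent bound $2|\g|\sqrt{(|\a|+|\d|)^2+|\b|^2}\,\sin\frac{\t}{2}$ with no system of equations to solve; this also matches the lower bound exactly whenever $\b=0$. Two cosmetic points: the phase in your expression for $X$ should be $1-e^{-i\t}$ on the $\ketbra{\cdot}{10}$ terms (irrelevant to every modulus you use), and $\l^\star$ is undefined when $\a=\d=0$, where any $\l\in[0,1]$ gives the sum $|\b|$ and the bound still holds; neither affects the argument.
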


\begin{proof}  We calculate $\cD(I,U_{CN} U_{CP} U_{CN})$ by finding its upper and lower bounds. If the upper bound coincide with the lower bound, then we obtain the desired value.
 
 For a pure state $\ket{\ps}=\sum_{m,n=0}^1a_{m,n}\ket{m,n}$ with $\sum_{m,n}\abs{a_{m,n}}^2=1$, we set the state $\r=\proj{\ps}$ and $\s=(U_{CN}U_{CP}U_{CN})\r(U_{CN} U_{CP}  U_{CN})^\dg$. 
From Corollary \ref{cor:geqnorm1}, it can be obtained  that 
\begin{eqnarray}
	\label{ieq:Xnormr1-s1,1}
	\norm{\r-\s}_{W_1}
	\geq &&
	\frac{1}{2}\norm{\r_1-\s_1}_1+\frac{1}{2}\norm{\r_2-\s_2}_1
	\\
		\label{ieq:Xnormr1-s1,2}
	=&&2\sin\frac{\t}{2}\abs{a_{1,0}}(\abs{a_{0,0}}+\abs{a_{1,1}}),
\end{eqnarray}
where $\r_k$ and $\s_k$ are the reduced density operator of $\r$ and $\s$, respectively.
From $\sum_{m,n}\abs{a_{m,n}}^2=1$, we have $2\abs{a_{1,0}}(\abs{a_{0,0}}+\abs{a_{1,1}})\leq \sqrt{2}$, and the equality holds for the input state $\r_{inf}=\proj{\ps_1}$ and $\s_{inf}=(U_{CN} U_{CP}  U_{CN})\r_{inf}(U_{CN}U_{CP}U_{CN})^\dg$, where $\ket{\ps_1}=\sum_{m,n}a_{m,n}\ket{m,n}$ with the coefficients  satisfying $\abs{a_{0,0}}=\abs{a_{1,1}}=\frac{1}{2}$, $\abs{a_{0,1}}=0$, $\abs{a_{1,0}}=\frac{1}{\sqrt{2}}$. From (\ref{def:D(I,X)=}), $\cD(I,U_{CN}U_{CP}U_{CN})$ is obtained by taking the maximization over all input states. From (\ref{ieq:Xnormr1-s1,1}) and (\ref{ieq:Xnormr1-s1,2}), we have obtained that  $\norm{\r_{inf}-\s_{inf}}_{W_1}\geq\sqrt{2}\sin\frac{\t}{2}$. Hence we have
\begin{eqnarray}
	\label{ieq: rangeX}
	\cD(I,U_{CN}U_{CP}U_{CN})\geq 	\sqrt{2}\sin\frac{\t}{2}.
\end{eqnarray}  
According to Definition \ref{def:w1diatance},  one has
\begin{eqnarray}
	\label{eq:E(I,X)def}
	&&\cD(I,U_{CN}U_{CP}U_{CN})\\
	=&&\max_{\r\in\cS_2}\norm{\r-(U_{CN}U_{CP}U_{CN})\r (U_{CN}U_{CP}U_{CN})^\dg}_{W_1}
	\nonumber\\
	=&&\max_{\r\in\cS_2}\min \left\{\sum_{i=1}^2 c_i: c_i\geq0, \r-(U_{CN}U_{CP}U_{CN})\r (U_{CN}U_{CP}U_{CN})^\dg=\sum_{i=1}^2c_iF^{(i)},\tr_i F^{(i)}=0\right\},\nonumber
\end{eqnarray}
where $F^{(i)}=\r^{(i)}-\s^{(i)}\in \cM_2$ satisfying $\tr_iF^{(i)}=0$. Since $\cD(I,U_{CN}U_{CP}U_{CN})$ is derived by taking the minimization of $c_1+c_2$ over all $F^{(i)}$'s, the coefficient $c_1+c_2$ induced by a particular set of $F^{(1)}$ and $F^{(2)}$ is the upper bound of  $\cD(I,U_{CN}U_{CP}U_{CN})$.
We consider the particular case for  $F^{(i)}=\r^{(i)}-M\r^{(i)}M\in \cM_2$  satisfying $\tr_iF^{(i)}=0$, for $M=\diag\{1,1,-1,1\}$. Here $\r^{(i)}$ is any two-qubit pure state.  We aim to show that the upper bound of (\ref{eq:E(I,X)def}) is equal to $\sqrt{2}\sin\frac{\t}{2}$. It can be realized by finding a couple of $F^{(1)}$ and $F^{(2)}$ such that $c_1+c_2\leq\sqrt{2}\sin\frac{\t}{2}$  for all  pure states $\r$.  

Generally, for any pure state $\r=
\sum_{j,k,s,t}a_{j,k}a_{s,t}^*\ketbra{j,k}{s,t}$, one has
\begin{eqnarray}
	\label{eq:r-XrX}
	&&\r-(U_{CN}U_{CP}U_{CN})\r (U_{CN}U_{CP}U_{CN})^\dg
	\nonumber\\
	=&&
	\bma
	0&0&a_{0,0}a_{1,0}^*(1-e^{-i\t })&0\\
	0&0&a_{0,1}a_{1,0}^*(1-e^{-i\t })&0\\
	a_{0,0}^*a_{1,0}(1-e^{ i\t })&a_{0,1}^*a_{1,0}(1-e^{i\t })&0&a_{1,0}a_{1,1}^*(1-e^{i\t })\\
	0&0&a_{1,0}^*a_{1,1}(1-e^{-i\t })&0\\
	\ema.
\end{eqnarray}
Any $F^{(k)}\in \cM_2$ satisfying $\tr_kF^{(k)}=0$ can be written as
\begin{eqnarray}
	\label{eq:XD1D2}
	F^{(1)}=
	\bma
	0&0&2g_{0,0}g_{1,0}^*&0\\
	0&0&2g_{0,1}g_{1,0}^*&0\\
	2g_{0,0}^*g_{1,0}&2g_{0,1}^*g_{1,0}&0&0\\
	0&0&0&0\\
	\ema,
	F^{(2)}=
	\bma
	0&0&0&0\\
	0&0&2h_{0,1}h_{1,0}^*&0\\
	0&2h_{0,1}^*h_{1,0}&0&2h_{1,0}h_{1,1}^*\\
	0&0&2h_{1,0}^*h_{1,1}&0\\
	\ema,
\end{eqnarray}
where the entries satisfy $\sum_{j,k}\abs{g_{j,k}}^2=\sum_{j,k}\abs{h_{j,k}}^2=1$.

We need find the $g_{jk}$, $h_{jk}$ and $c_k$'s, such that
\begin{eqnarray}
	\label{eq:Xr-uczrucz}
	&&\r-(U_{CN}U_{CP}U_{CN})\r (U_{CN}U_{CP}U_{CN})^\dg=c_1F^{(1)}+c_2F^{(2)},
	\\
	\label{eq:Xc1+c2}
	&&c_1+c_2=\sqrt{2}\sin\frac{\t}{2}
\end{eqnarray}
holds for any $\r$, i.e.,
\begin{subnumcases}
	{\qquad\qquad\qquad\qquad\qquad\qquad}
	\label{eqset:Xcef1}
	a_{0,1}a_{1,0}^*\frac{(1-e^{-i\t })}{2}=c_1g_{0,1}g_{1,0}^*+c_2h_{0,1}h_{1,0}^*
	\\
	a_{0,0}a_{1,0}^*\frac{(1-e^{-i\t })}{2}=c_1g_{0,0}g_{1,0}^*
	\\
	a_{1,1}a_{1,0}^*\frac{(1-e^{-i\t })}{2}=c_2h_{1,1}h_{1,0}^*
	\\
	\label{eqset:Xnormalization}
	\sum_{j,k}\abs{g_{j,k}}^2=\sum_{j,k}\abs{h_{j,k}}^2=\sum_{j,k}\abs{a_{j,k}}^2=1
	\\
	\label{eqset:Xcef6}
	c_1+c_2=\sqrt{2}\sin\frac{\t}{2},\quad
	c_k\geq 0,\; \mbox{for}\; k=1,2		
\end{subnumcases}	
holds for any $a_{j,k}$. We consider two margin cases for the coefficients $a_{j,k}$, which will be used later.
\begin{enumerate}
	\item From (\ref{eq:r-XrX}), we have $\norm{\r-(U_{CN}U_{CP}U_{CN})\r (U_{CN}U_{CP}U_{CN})^\dg}_{W_1}=0$ when $a_{1,0}=0$ or $a_{1,0}=1$. Using  (\ref{eq:E(I,X)def}), it can be obtained that $c_1+c_2\leq  \sqrt{2}\sin\frac{\t}{2}$. 
	\item If $a_{0,0}=a_{1,1}=0$, we can choose $c_1=c_2=\frac{1}{\sqrt{2}}\sin\frac{\t}{2}$, $g_{0,0}=h_{1,1}=0$, $g_{1,0}^*=h_{0,1}=\frac{1}{\sqrt{2}}a_{0,1}$, $g_{0,1}^*=h_{1,0}=a_{1,0}\exp\{i(\frac{\t}{2}-\frac{\pi}{2})\}$ and $g_{1,1}=h_{0,0}=\sqrt{1-\frac{1}{2}\abs{a_{0,1}}^2-\abs{a_{1,0}}^2}$, such that (\ref{eqset:Xcef1})-(\ref{eqset:Xcef6}) hold. 
\end{enumerate}
	Next we only consider the case for
	\begin{eqnarray}
		\label{neq:a10}
	&&a_{1,0}\neq 0,1,\\
	\label{neq:a11}
	&&a_{0,0}\neq 0 \; \mbox{or}\; a_{1,1}\neq 0.	
	\end{eqnarray}

Using the normalization condition in (\ref{eqset:Xnormalization}), we can parameterize the coefficients $g_{j,k}$ and $h_{j,k}$ by $\a_j,\b_j\in[0,\frac{\pi}{2}]$, i.e., 
\begin{eqnarray}
	\label{eq:Xejk1}
	&&g_{0,0}=\cos\a_0\sin\a_1e^{i\a_2},
	\quad
	g_{0,1}=\cos\a_0\cos\a_1,
	\\
	\label{eq:Xejk2}
	&&g_{1,0}=\sin\a_0\sin\a_3e^{i\a_6},
	\quad
	g_{1,1}=\sin\a_0\cos\a_3e^{i\a_4},
\end{eqnarray}
and
\begin{eqnarray}
	\label{eq:Xfjk1}
	&&h_{0,0}=\sin\b_0\cos\b_3e^{i\b_4},
	\quad
	h_{0,1}=\cos\b_0\cos\b_1,
	\\
	\label{eq:Xfjk2}
	&&h_{1,0}=\sin\b_0\sin\b_3e^{i\b_6},
	\quad
	h_{1,1}=\cos\b_0\sin\b_1e^{i\b_2},
\end{eqnarray}
such that
\begin{subnumcases}{\qquad}
	\label{eqset:Xcab1}
	a_{0,1}a_{1,0}^*\frac{(1-e^{-i\t })}{2}=c_1\cos\a_0\cos\a_1\sin\a_0\sin\a_3e^{-i\a_6}
	+
	c_2\cos\b_0\cos\b_1
	\sin\b_0\sin\b_3e^{-i\b_6},
	\\
	\label{eqset:Xcab2}
	a_{0,0}a_{1,0}^*\frac{(1-e^{-i\t })}{2}=c_1\cos\a_0\sin\a_1e^{i\a_2}
	\sin\a_0\sin\a_3e^{-i\a_6},
	\\
	\label{eqset:Xcab3}
	a_{1,1}a_{1,0}^*\frac{(1-e^{-i\t })}{2}=c_2\cos\b_0\sin\b_1e^{i\b_2}
	\sin\b_0\sin\b_3e^{-i\b_6},
	\\
	\label{eqset:Xcab4}
	c_1+c_2=\sqrt{2}\sin\frac{\t}{2}.
\end{subnumcases}
For any $a_{j,k}\in\bbC$, we can always choose appropriate phase $\a_2,\b_2,\a_6,\b_6$ so that the phase of $a_{j,k}$ can be satisfied.  Without loss of generality, we can only consider the case that $a_{j,k}$ are nonnegative real value, i.e.,
\begin{eqnarray}
	\label{eq:Xconditionbjk}
	a_{j,k}\geq0 \quad \mbox{and} \quad \sum_{j,k}a_{j,k}^2=1.	
\end{eqnarray}
Using (\ref{neq:a10}) and (\ref{neq:a11}),  we assume that $a_{1,0}\in (0,1)$, and $a_{0,0}\in(0,1)$ or $a_{1,1}\in(0,1)$ here.

Now we show that (\ref{eqset:Xcab1})-(\ref{eqset:Xcab4}) is viable by choosing appropriate parameters. We set $\a_2=0$ and $\a_j=\b_j$, for $j=0,1,2,3,6$. 
From (\ref{eqset:Xcab2})-(\ref{eqset:Xcab4}), we assume
\begin{eqnarray}
	\label{eq:Xb01/c1=b10/c2}
	c_1=\frac{\sqrt{2}\sin\frac{\t}{2}a_{0,0}}{a_{0,0}+a_{1,1}},
	c_2=\frac{\sqrt{2}\sin\frac{\t}{2}a_{1,1}}{a_{0,0}+a_{1,1}}.
\end{eqnarray}
Using (\ref{eqset:Xcab2}), (\ref{eqset:Xcab3}) and (\ref{eq:Xb01/c1=b10/c2}), Eq.  (\ref{eqset:Xcab1}) can be equivalently  transformed into
\begin{eqnarray}
	\label{eq:Xb01cota}
	a_{0,0}\cot\a_1+a_{1,1}\cot\b_1=a_{0,1}.
\end{eqnarray}
It holds by choosing appropriate parameters  $\a_1=\b_1$. 
Next we show that equations (\ref{eqset:Xcab2}) and (\ref{eqset:Xcab3}) can also be satisfied.  
First we use (\ref{eq:Xb01cota}) to obtain that 
\begin{eqnarray}
	\label{eq:Xsina1}
	\sin\a_1=\sin\b_1=\frac{a_{0,0}+a_{1,1}}{\sqrt{a_{0,1}^2+(a_{0,0}+a_{1,1})^2}}.
\end{eqnarray}
Then based on (\ref{eq:Xb01/c1=b10/c2}) and  (\ref{eq:Xsina1}), we perform the transformation on (\ref{eqset:Xcab2}) and (\ref{eqset:Xcab3})  to put the free parameters $\a_0=\b_0$ and $\a_3=\b_3$ on the lhs alone. They become the same equation as follows.
\begin{eqnarray}
	\label{eq:Xcosasina}
	\cos\a_0\sin\a_0\sin\a_3e^{-i\a_6}=e^{i(\frac{\pi}{2}-\frac{\t}{2})}\frac{a_{1,0}}{\sqrt{2}}\sqrt{a_{0,1}^2+(a_{0,0}+a_{1,1})^2},
\end{eqnarray}
Note that $\frac{a_{1,0}}{\sqrt{2}}\sqrt{a_{0,1}^2+(a_{0,0}+a_{1,1})^2}\in(0,\frac{1}{2}]$. So Eq.  (\ref{eq:Xcosasina}) can always be satisfied for any $a_{j,k}$ in (\ref{eq:Xconditionbjk}) by choosing $\a_6=\b_6=\frac{\t}{2}-\frac{\pi}{2}$ and  appropriate parameters $\a_0,\a_3,\b_0,\b_3$. Hence, (\ref{eqset:Xcab2}) and (\ref{eqset:Xcab3}) can be satisfied.

Based on (\ref{eq:XD1D2}) and (\ref{eq:Xejk1})-(\ref{eq:Xfjk2}), the above analysis prove the existence of the $c^{(k)}$ and $F^{(k)}$ in (\ref{eq:r-XrX}) and (\ref{eq:Xc1+c2}) for any $\r$.
It means that there is a kind of decomposition following the rule in (\ref{eq:E(I,X)def}), such that $c_1+c_2=\sqrt{2}$. Recall that the quantum Wasserstein distance between operations  $\cD(I,U_{CN}U_{CP}U_{CN})$ is defined by taking the minimization of $c_1+c_2$ over all decompositions in (\ref{eq:E(I,X)def}).  One can obtain that
\begin{eqnarray}
	\cD(I,U_{CN}U_{CP}U_{CN})\leq \sqrt{2}\sin\frac{\t}{2}.
\end{eqnarray}
Combining with (\ref{ieq: rangeX}), it holds that
\begin{eqnarray}
	\cD(I,U_{CN}U_{CP}U_{CN})= \sqrt{2}\sin\frac{\t}{2},
\end{eqnarray}	
which is the desired result.
\end{proof}

\bibliographystyle{unsrt}
\bibliography{W1_norm}
\end{document}